%
\documentclass[runningheads]{llncs}
\usepackage{graphicx}

\usepackage{cite}
\usepackage{amsmath,amssymb,amsfonts}
\usepackage{algorithmic}
\usepackage{graphicx}
\usepackage{textcomp}
\usepackage{xcolor}
\usepackage{multicol}

\usepackage{graphicx} 
\usepackage[vlined]{algorithm2e} 
\usepackage{url}
\usepackage{multirow}
\usepackage[normalem]{ulem}
\usepackage{tikz}
\usetikzlibrary{shapes.gates.logic.US,trees,positioning,arrows}
\usetikzlibrary{calc}
\usepackage{amssymb,amsmath}
\usepackage{pgfplots}
\usepgfplotslibrary{statistics}
\usepackage{booktabs} 
\usepackage{array} 
\usepackage{paralist} 
\usepackage{verbatim}
\usepackage{wrapfig}
%

\usepackage[labelfont=bf]{caption}
\usepackage{subcaption}

\newcommand{\hlbox}[1]{%
  \smallskip\begin{center}
  \fboxrule1pt\fboxsep3pt\fcolorbox{black!45}{black!8}{%
  \begin{minipage}{.96\linewidth}#1\end{minipage}}
  \end{center}\smallskip}

\allowdisplaybreaks


\newcommand{\MinTime}{\emph{min time}\xspace}

\newcommand\recht\operatorname

%

\begin{document}
\title{Attack time analysis in dynamic attack trees via integer linear programming\thanks{This research has been partially funded   by ERC Consolidator grant 864075 CAESAR and the European Union’s Horizon 2020 research and innovation programme under the Marie Skłodowska-Curie grant agreement No. 101008233.}}
%
%
\author{Milan Lopuhaä-Zwakenberg\inst{1} \and
Mariëlle Stoelinga\inst{1,2}}
\authorrunning{M. Lopuhaä-Zwakenberg \& M. Stoelinga}
%
\institute{University of Twente, the Netherlands \email{\{m.a.lopuhaa,m.i.a.stoelinga\}@utwente.nl} \and Radboud University, the Netherlands}

\pgfplotstableread{
X Y
0.5000   -0.7455
1.5000   -0.6021
2.5000   -0.3912
3.5000   -0.1160
4.5000    0.2150
5.5000    0.7804
6.5000    1.5589
7.5000    2.0527
}{\mytablenaive}

\pgfplotstableread{
X Y 
0.5000   -0.6922
1.5000   -0.3912
2.5000   -0.1530
3.5000   0.0746
4.5000   0.1938
5.5000   0.3461
6.5000    0.4586
7.5000    0.5969
8.5000    0.6203
9.5000    0.7084
10.5000    0.8045
11.5    0.8599
12.5    0.9005
}{\mytablemnaive}

\pgfplotstableread{
X Y 
0.5000   -0.5757
1.5000   -0.3996
2.5000   -0.2321
3.5000   -0.1160
4.5000    0.0280
5.5000    0.0571
6.5000    0.1761
7.5000    0.2372
8.5000    0.3010
9.5000    0.4033
10.5000    0.3785
11.5    0.4073
12.5    0.4319
}{\mytablemilp}

\pgfplotstableread{
X Y 
0.5000   -0.7270
1.5000   -0.3291
2.5000   -0.0503
3.5000   0.1232
4.5000    0.2694
5.5000    0.3897
6.5000    0.5098
7.5000    0.6004
8.5000    0.6824
9.5000    0.7804
10.5000    0.7948
11.5    0.8764
12.5    0.9416
}{\mytablemmilp}

\pgfplotstableread{
X Y
0.5   -0.8062
1.5   -0.5757
2.5   -0.3912
3.5   0.0819
4.5   0.2639
5.5   0.7793
6.5   1.6355
7.5   1.9816
}{\mytablenaivedag}

\pgfplotstableread{
X Y 
0.5000   -0.3912
1.5000   -0.3219
2.5000   -0.1934
3.5000   -0.0986
4.5000   -0.0209
5.5000   0.0571
6.5000    0.1023
7.5000    0.1481
8.5000    0.2024
9.5000    0.2526
10.5000    0.2802
11.5    0.3274
12.5    0.3611
}{\mytablemilpdag}

\pgfplotstableread{
X Y 
0.5000   -0.7270
1.5000   -0.3748
2.5000   -0.1434
3.5000   0.0389
4.5000    0.2003
5.5000    0.3492
6.5000    0.4839
7.5000    0.6570
8.5000    1.0607
9.5000    1.3670
10.5000    1.9460
11.5    2.1194
12.5    2.4654
}{\mytablemnaivedag}

\pgfplotstableread{
X Y 
0.5000   -0.4737
1.5000   -0.2041
2.5000   -0.0353
3.5000   0.1383
4.5000    0.2430
5.5000    0.3582
6.5000    0.4418
7.5000    0.5362
8.5000    0.6105
9.5000    0.6956
10.5000    0.7608
11.5    0.8283
12.5    0.8928
}{\mytablemmilpdag}

\maketitle              
\begin{abstract}
Attack trees (ATs) are an important tool in security analysis, and an important part of AT analysis is computing metrics. However, metric computation is NP-complete in general. In this paper, we showcase the use of mixed integer linear programming (MILP) as a tool for quantitative analysis. Specifically, we use MILP to solve the open problem of calculating the \emph{min time} metric of dynamic ATs, i.e.,  the minimal time to attack a system. We also present two other tools to further improve our MILP method: First, we show how the computation can be sped up by identifying the modules of an AT, i.e. subtrees connected to the rest of the AT via only one node. Second, we define a general semantics for dynamic ATs that significantly relaxes the restrictions on attack trees compared to earlier work, allowing us to apply our methods to a wide variety of ATs. Experiments on a synthetic testing set of large ATs verify that both the integer linear programming approach and modular analysis considerably decrease the computation time of attack time analysis.

\keywords{Attack trees \and Quantitative analysis \and Optimization \and Mixed integer linear programming.}
\end{abstract}

\section{Introduction} \label{sec:int}

\begin{wrapfigure}[6]{r}{3cm}
\centering
\vspace{-3em}
\includegraphics[width=2.5cm]{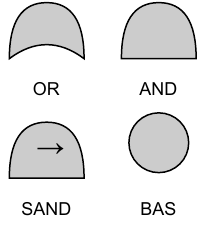}
\end{wrapfigure}
\textbf{(Dynamic) attack trees.} 
Attack trees (ATs) are a prominent methodology in security analysis. They facilitate
security specialists in identifying, documenting, analyzing and prioritizing
 (cyber) risks.
An AT is a hierarchical diagram that describes a system's vulnerabilities to an adversary's attacks.
Despite their name, ATs are rooted directed acyclic graphs. Roots of ATs represent the adversary's goal, while the leaves represent basic attack steps (BAS) undertaken by the adversary. Each internal root is labeled with a gate, determining how its activation depends on that of its children.
Standard ATs (SATs) feature only $\mathtt{OR}$ and $\mathtt{AND}$ gates, 
but many extensions have been introduced to describe more elaborate attack scenarios \cite{Kordy2014}. One of the most prominent extensions are \emph{dynamic ATs} (DATs) \cite{Jhawar2015}. DATs introduce a $\mathtt{SAND}$ (sequential $\mathtt{AND}$) gate, which is activated only when its children are activated sequentially in the correct order. By contrast, an $\mathtt{AND}$-node's children can be activated in parallel. An example is given in Figure \ref{fig:troy}.

\begin{wrapfigure}[24]{r}{4cm}
    \centering
\vspace{-2.5em}
\includegraphics[width=4cm]{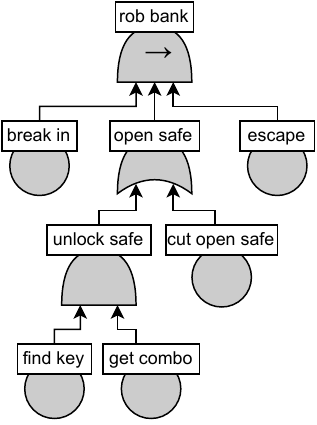}
\caption{A DAT for a bank robbery \cite{Arnold2014}. To rob a bank, attackers must break in, open the safe, and escape (in that order). The safe is opened by cutting it open, or by unlocking via obtaining the key and combination.}
 \label{fig:troy}
\end{wrapfigure}

\noindent \textbf{Quantitative analysis.}
Quantitative analysis aims at computing {\em AT metrics}. Such metrics  formalize how well a system performs in terms of security, and are essential when comparing alternatives or making trade-offs. Many such metrics exist, such as the minimal cost, minimal required skill, or maximal damage of a successful attack. This paper focuses on \emph{min time}: the minimal time the adversary needs to perform a successful attack, given the duration of each BAS. This is important, since
 attack success crucially depends on time: attacks that take too long are not viable. Insight in timing behaviors of attacks  is therefore a key to devising effective countermeasures. For instance, a security operations centre is interested in the time difference between the fastest viable attack and its average response time \cite{agyepong2020challenges}. 
 \emph{Min time} is especially relevant in the context of DATs: 
On many metrics, such as cost/probability/skill,
$\mathtt{SAND}$ and $\mathtt{AND}$ gates behave identically. Thus, to compute those metrics, 
algorithms for SATs immediately generalize to DATs. 
It is in the timing behavior that the  difference between 
$\mathtt{SAND}$ and $\mathtt{AND}$
manifests itself, so that novel computation algorithms are needed.


\begin{wrapfigure}[4]{r}{3cm}
\vspace{-5.5em}
\centering
\includegraphics[width=2.5cm]{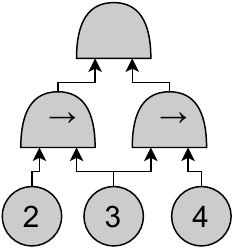}
\end{wrapfigure}
\noindent \textbf{Existing algorithms for \emph{min time}.}
The naive approach to calculting \emph{min time} is to list all attacks that reach the root, and to find the one that takes the least time; clearly this is computationally prohibitive for larger ATs.
%
A tree-shaped DAT can be computed via a bottom-up (BU) algorithm \cite{Mauw2006,Jhawar2015}.
This algorithm works for general attributes (e.g. cost, probability, time), by using appropriate operators at each gate.
For DAG-shaped ATs, the BU algorithm does not always work, because the values in different branches are no longer independent. For SATs this is not a problem because the relevant operators are idempotent \cite{Kordy2018}. In the DAT above, however, the BU algorithm of \cite{Jhawar2015} calculates \emph{min time} as $\max(2+3,3+4) = 7$. However, the only successful attack is the one that activates the three BAS sequentially, and so \emph{min time} equals $2+3+4=9$. Thus to find \MinTime for DAG-shaped DATs new approaches are needed; in \cite{Budde2021}, efficient computation for DAG-shaped DATs is left as an open problem.

\noindent \textbf{Integer linear programming.}
 In this paper, we present a novel method to  calculate \emph{min time} for general DATs based on MILP. We translate calculating \emph{min time} into a real-valued optimization problem, with a set of nonlinear constraints. We rewrite these into linear constraints by introducing auxiliary integer variables at each gate; for SAND-gates this is nontrivial and requires a careful analysis of the semantics, beyond the current literature (see below). Since dedicated solvers exist for MILP, translating attack time analysis into MILP speeds up computation time considerably. 


\noindent \textbf{Modular analysis.}
To improve performance, we combine MILP with modular analysis \cite{Dutuit1996}: we identify \emph{modules} in a DAT, i.e., subDATs whose only connections to the rest of the DAT go via their root.
We prove that \emph{min time} can be computed by analyzing the modules separately; this requires a detailed comparison of the attacks on the larger DAT to the attacks on its modules. If a module is tree-shaped or static, then we can deploy the bottom-up algorithm to further decrease computation time. We integrate these modifications into our MILP algorithm.

\noindent \textbf{Generalized semantics.}
Another point we settle in this paper are generalized semantics for DATs.  
As $\texttt{SAND}$-gates require their children to be executed consecutively, different branches in the DAT may impose conflicting restrictions on the execution orders.
To rule out these conflicts, \cite{Budde2021} imposed well-formedness criteria at the cost of ruling out some satisfiable DATs. Furthermore, the corresponding attack definition was overly restrictive, with some fastest attacks not being recognized. This leads to an overestimation of \emph{min time}. In this work we extend the definition of a (successful) attack so \emph{min time} is correctly defined. This new definition applies to all DATs, not just the well-formed ones.

\noindent \textbf{Experimental validation.} 
For confidentiality reasons industrial DATs are typically not disclosed to the general public \cite{byres2004use,paul2014towards}. Therefore, we create a testing suite of 2400 synthetic DATs,
obtained by combining smaller DATs from the literature via standard DAT composition methods,
and we compare the performance of four methods (modular versus nonmodular and enumerative versus MILP). The experiments show that on larger DATs MILP outperforms enumerative, and modular outperforms nonmodular. The code for the experiments, the generated DATs and the experimental results are available in \cite{code}, and a version with proofs is available at \cite{lopuhaa2021attack}.

\vspace{-0.75em}
\hlbox{\textbf{Contributions.}
Summarized our main contributions are:
\begin{enumerate}
    \item A generalization of the poset semantics of \cite{Budde2021} that 
    significantly relaxes the syntactic constraints on the use of $\mathtt{SAND}$-gates.
    \item A novel algorithm to calculate \emph{min time} for general DATs based on Mixed Integer Linear Programming.
    \item A modularization approach that yields significant speed ups by separately handling fragments of the DAT that are static or tree-shaped.
    \item Extensive experimental validation to evaluate the performance of the algorithms.
\end{enumerate}
}
\vspace{-0.75em}
\section{Related work} \label{sec:rel}

Dynamic ATs were first formally defined in \cite{Jhawar2015}, with series-parallel graphs semantics.
These assume that each node must be activated separately for each of its parents. Effectively, this makes any DAT tree-shaped, which limits the range of scenarios that can be modeled.

Poset-semantics for DATs are used in \cite{Budde2021}; here each node can be activated only once, allowing more scenarios to be modeled. The calculation of time-related metrics such as \emph{min time} on DAG-shaped DATs is left as an open problem.

In \cite{Arnold2015,Kumar2015,kumar2018effective,vitkus2020method,ali2021attack} DATs are modeled as priced-timed automata. This allows for a detailed analysis, including \emph{min time} calculation, by activating nodes from the root either in parallel or sequentially, depending on gate type. However, this approach does not consider satisfiability; hence the \MinTime found via this method can correspond to a non-existing attack. As such, this method only calculates a lower bound to the actual \MinTime. 

Cyber security risks are also analyzed via \emph{time-to-compromise} \cite{mcqueen2006time}. This assigns an (exponential) probability distribution to the failure time of each component, from which one finds the system failure pdf. This approach can be extended to consider different attack scenarios \cite{rencelj2022estimating}. The current paper's DAT approach allows for a more systematic way of studying different attack scenarios, but we do not consider probabilistic data. Another way to incorporate stochastics is to consider \emph{Bayesian fault trees} \cite{franccois2016bayesian,meyur2020bayesian}, in which a node's activation depends probabilistically on that of its children. This allows for more detailed modelling, but analysis is considerably more complicated: instead of a single \MinTime metric, there is a Pareto front of attack time and attack success probability. Incorporating probability in these manners would be interesting for future work. 

Time analysis of DATs falls into the wider framework of quantitative analysis on ATs. Existing approaches either focus on a single metric \cite{Arnold2014,buldas2006rational,byres2004use} or they develop methods that apply to general classes of metrics \cite{Mauw2006,Kordy2018,Budde2021}. The latter case typically use algebraic structures like semirings, defining the metric in terms of operators which are assumed to have certain properties.

\section{Dynamic attack trees} \label{sec:dat}

This section reviews the definition of DATs, and develops their semantics and the \emph{min time} metric. The notation introduced throughout the paper is summarized in Table \ref{tab:nota}. The following definition of a DAT is from \cite{Budde2021}.

\begin{definition}
A \emph{dynamic attack tree} (DAT) is a rooted directed acyclic graph $T = (N,E)$ where each node $v \in N$ has a type $\gamma(v) \in \{\mathtt{BAS},\mathtt{OR},\mathtt{AND},\mathtt{SAND}\}$ such that $\gamma(v) = \mathtt{BAS}$ if and only if $v$ is a leaf, and every node $v$ with $\gamma(v) = \mathtt{SAND}$ has an ordering of its set of children.
\end{definition}

Note that a DAT is not necessarily a tree. If it is, we call it \emph{tree-shaped}. The root is denoted $\recht{R}_T$. For $\gamma \in \{\mathtt{BAS},\mathtt{OR},\mathtt{AND},\mathtt{SAND}\}$, we write $N_\gamma$ for the set of nodes $v$ with $\gamma(v) = \gamma$. The (po)set of children of $v$ is denoted $\recht{ch}(v)$. If $\gamma(v) = \mathtt{SAND}$ and $v$ has (ordered) children $v_1,\ldots,v_n$, we write $v = \mathtt{SAND}(v_1,\ldots,v_n)$ for convenience. We do the same for $\mathtt{OR}$ and $\mathtt{AND}$, where the ordering of the children does not matter. We write $T_v$ for the subDAG consisting of all \emph{descendants} of $v$, i.e. all $v'$ for which there is a path from $v$ to $v'$, including $v$ itself. Furthermore, we let $B_v$ be the set of descendants of $v$ in $N_{\mathtt{BAS}}$. DATs can be represented graphically as in Fig.~\ref{fig:troy}.

A dynamic attack tree codifies the ways an attacker can make a system fail by executing the \emph{basic attack steps}, i.e., the nodes in $N_{\mathtt{BAS}}$. A non-$\mathtt{BAS}$ node is reached depending on its children, where $\mathtt{OR}$ and $\mathtt{AND}$ have the expected meaning, and a $\mathtt{SAND}$-node is reached if all children are reached in their given order. The adversary's goal is to reach $\recht{R}_T$. These semantics are defined in Section \ref{ssec:sem}.

\begin{table}[t]
\centering
\begin{tabular}{cccccc}
Notation & Meaning & Section & Notation & Meaning & Section \\
\hline
$T = (N,E)$ & Dynamic attack tree & \ref{sec:dat} & $\recht{mt}(T,d), \recht{mt}(T)$ & \MinTime of DAT $T$ & \ref{ssec:mt}\\
$\gamma(v)$ & Type of $v$ & \ref{sec:dat} & $\mathcal{F}_T$ & Time assignments of $T$ & \ref{sec:milp}\\
$\recht{R}_T$ & Root of $T$ & \ref{sec:dat} & $M$ & \MinTime upper bound & \ref{sec:milp}\\
$N_{\mathtt{BAS}}$ & $\{v \in N \mid \gamma(v) = \mathtt{BAS}\}$ & \ref{sec:dat} & $Z^v_i$ & Consecutive BAS pairs & \ref{sec:milp}\\
$T_v$ & subDAT with root $v$ & \ref{sec:dat}  & $x^v_i, y^v,z^v_{i,a,a'}$ & Auxiliary MILP variables & \ref{sec:milp}\\
$B_v$ & Set of BAS of $T_v$ & \ref{sec:dat} & $n_v$ & Number of children of $v$ & \ref{sec:milp}\\
$(\mathcal{A}_T,\leq)$ & Poset of attacks on $T$ & \ref{ssec:sem} & $T_v$ & Sub-DAT with root $v$ & \ref{sec:mod}\\
$\mathcal{S}_T$ & Successful attacks & \ref{ssec:sem} & $\tilde{v}$ & BAS replacement for $T_v$ & \ref{sec:mod}\\
$\recht{t}(\mathcal{O},d), \recht{t}(\mathcal{O})$ & Time of attack $\mathcal{O}$ & \ref{ssec:mt} &
$T^v$ & $T$ with $T_v$ replaced by $\tilde{v}$ & \ref{sec:mod}
\end{tabular}
\caption{Notation used in this paper.} \label{tab:nota}
\vspace{-2em}
\end{table}

In the literature, two interpretations of nodes with multiple parent nodes exist, affecting both semantics and metrics. In the first interpretation, \emph{multiple activation} (MA), \cite{Mauw2006,Jhawar2015,Widel2019} each BAS can be activated multiple times, and every parent of a node requires its own activation of that node. Thus $\mathtt{SAND}(a,a)$ succeeds only if $a$ is activated twice consecutively. By adding a copy of each node for each of its parents, any DAT can be transformed into a tree-shaped one with equivalent semantics and metrics. As a result, metrics can be calculated quickly via a bottom-up algorithm \cite{Fila2020}, but MA cannot adequately model systems in which one action has multiple independent consequences.

In \emph{single activation} (SA) \cite{Budde2021,Jurgenson2008} each BAS is executed at most once, and a node only needs to be activated once to count as an input for all its parents. In SA $\mathtt{SAND}(a,a)$ cannot be satisfied, because $a$ cannot be activated before itself. SA is able to describe a much wider range of systems; although every SA representation can be turned into an equivalent MA representation, this process is both computationally expensive as it is done by writing the corresponding boolean function in disjunctive normal form. This rewriting also loses the meaning of the intermediate nodes in the DAT, which typically represent intermediate attacker goals.
We therefore choose to analyze DATs under the SA interpretation; since every DAT is equivalent to a tree-shaped one under MA and MA and SA coincide on trees, SA can model every scenario that MA can.

\subsection{Semantics} \label{ssec:sem}

We discuss DAT semantics, extending \cite{Budde2021}. An attack consists of a set $A$ of attacker-activated BAS, and a strict partial order $\prec$, where $a \prec a'$ means $a$ is executed before $a'$.

\begin{definition} \label{def:attack}
The set $\mathcal{A}_T$ of \emph{attacks} on $T$ is the set of strictly partially ordered sets  $\mathcal{O} = (A,\prec)$, where $A \subseteq N_{\mathtt{BAS}}$. This set has a partial order $\leq$ given by $\mathcal{O} \leq \mathcal{O}'$, for $\mathcal{O} = (A,\prec)$ and $\mathcal{O}' = (A',\prec')$, if and only if $A \subseteq A'$ and ${\prec} \subseteq {\prec'}$.
\end{definition}

We are interested in successful attacks, i.e., attacks that manage to reach the root. Successful attacks, and the semantics of $T$, are defined as follows:

\begin{definition} \label{def:success}
Let $v$ be a node. We say that an attack  $\mathcal{O} = (A,\prec)$ \emph{reaches} $v$ if:
\begin{enumerate}
\item $v \in N_{\mathtt{BAS}}$ and $v \in A$;
\item $v = \mathtt{OR}(v_1,\ldots,v_n)$ and $\mathcal{O}$ reaches at least one of the $v_i$;
\item $v = \mathtt{AND}(v_1,\ldots,v_n)$ and $\mathcal{O}$ reaches all of the $v_i$;
\item $v = \mathtt{SAND}(v_1,\ldots,v_n)$ and  $\mathcal{O}$ reaches all of the $v_i$, 
and for all $a \in A \cap B_{v_i}$, $a' \in A \cap B_{v_{i+1}}$ one has $a \prec a'$.
\end{enumerate}
$A$ is \emph{successful} if it reaches $\recht{R}_T$. The \emph{semantics} of $T$ is the set $\mathcal{S}_T$ of successful attacks on $T$.
\end{definition}

\begin{wrapfigure}[6]{r}{3cm}
\centering
\vspace{-4em}
\includegraphics[width=2.5cm]{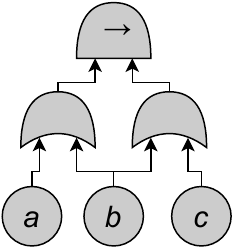}
\end{wrapfigure}
A $\mathtt{SAND}$-gate $v = \mathtt{SAND}(v_1,\ldots,v_n)$ is only reached if all of the BAS of $v_i$ have been (sucesfully) executed before any of the BAS of $v_{i+1}$ has started. By contrast, an $\mathtt{AND}$-gate allows its children to be executed in parallel. Contrary to the static case (without $\mathtt{SAND}$-gates), it is possible that $\mathcal{S}_T = \varnothing$. For example, $\mathcal{S}_{\mathtt{SAND}(a,a)}=\varnothing$. Also, being successful is \emph{not} monotonous on the set of attacks, i.e., it is possible that $\mathcal{O}$ is successful while $\mathcal{O}'$ is not, even if $\mathcal{O} \leq \mathcal{O}'$. For instance, in the DAT above $(\{a,c\},\{(a,c)\})$ is a successful attack, but $(\{a,b,c\},\{(a,c)\})$ is not. Note that unlike the situation for SATs, a gate's activation does not simply depend on the activation of its children, but also on the relative order on the BAS associated to these children; this encodes the iming information essential to DATs.

\begin{wrapfigure}[8]{r}{2cm}
\centering
\vspace{-2em}
\includegraphics[width=1.5cm]{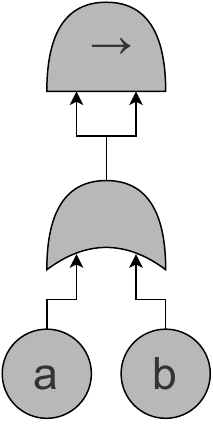}
\end{wrapfigure}
Definition \ref{def:success} is not the only way one might define the semantics of DATs. In fact, our semantics are based on those of \cite{Budde2021}, but differ on certain DATs; see Section \ref{ssec:carlos}. We have chosen to interpret the $\mathtt{SAND}$-gate in a strict matter, so that it is activated only if the entirety of the attack on $v_i$ has finished before the attack on $v_{i+1}$ is started; in particular, $v_i$ and $v_{i+1}$ cannot share activated BAS, which may be considered unwanted behaviour. There are also other approaches, which unfortunately have other problems. For instance, one could define succesful attacks bottom-up in a compositional fashion, defining $\mathcal{O}$ to reach $\mathtt{SAND}(v_1,v_2)$ if there exists attacks $\mathcal{O}_1,\mathcal{O}_2$ such that $\mathcal{O}$ is the parallel composition of $\mathcal{O}_1$ and $\mathcal{O}_2$. However, under such a definition the AT above $(\{a,b\},\{(a,b)\})$ is a succesful attack, whereas in our opinion this AT should not be considered satisfiable. Yet another approach would be to assign a starting and finishing time to each node, similar to what we do in Definition \ref{def:ta}, but this has the disadvantage of being more convoluted as an attack is now a function $N \rightarrow \mathbb{R}$. 

\subsection{The \emph{min time} metric} \label{ssec:mt}

\emph{Min time} is the minimal time it takes to perform a successful attack on a given DAT. While other metrics exist for DATs, \emph{min time} is a fundamental time metric, and calculating it efficiently for non-tree-shaped DATs is an open problem \cite{Budde2021}.

\emph{Min time} is defined as follows: There is a \emph{duration function} $d\colon N_{\mathtt{BAS}} \rightarrow \mathbb{R}_{\geq 0}$, with $d(a)$ denoting the time it takes to execute $a$. If $a \prec a'$, then the BAS $a'$ can only be started once $a$ has been completed, while $a$ and $a'$
can be activated in parallel if such a relation does not exist. As such, we can define the total duration of an attack $\recht{t}(\mathcal{O},d)$ and \emph{min time} $\recht{mt}(T,d)$ as
\begin{equation*}
\recht{t}(\mathcal{O},d) = \max_{\substack{C\textrm{ max. chain} \\ \textrm{in }\mathcal{O}}} \sum_{a \in C} d(a), \quad \quad \recht{mt}(T,d) = \min_{\mathcal{O} \in \mathcal{S}_T} \recht{t}(\mathcal{O},d)
\end{equation*}
where the maximum is taken over the maximal chains (i.e., maximal linearly ordered subsets) of the strict poset $\mathcal{O}$. We will often omit $d$ from the notation and write $\recht{t}(\mathcal{O})$ if there is no confusion. Note that $\recht{t}$ is monotonous: if $\mathcal{O} \leq \mathcal{O}'$
one has $\recht{t}(\mathcal{O}) \leq \recht{t}(\mathcal{O}')$. Furthermore, $\recht{mt}(T) = \infty$ if $\mathcal{S}_T = \varnothing$. 

\begin{example} \label{ex:troy}
Figure \ref{fig:troy2} depicts the bank robbery DAT of Figure \ref{fig:troy} augmented with durations for the BAS (we take the expected durations from the distributions given in \cite{Arnold2014}). To calculate $\recht{mt}(T)$ one would first need to find $\mathcal{S}_T$. While this set is quite large, because of the monotonicity of $\recht{t}$, the minimum is attained at one of the minimal elements of the poset $(\mathcal{S}_T,\leq)$. There are two minimal attacks, depending on whether the attackers choose to cut open the safe, or unlock it. Abbreviating BAS names, we can represent these minimal attacks as sets of chains as $\mathcal{O}_1 = \{bi \prec cos \prec e\}$ and $\mathcal{O}_2 = \{bi 
\prec fk \prec e, bi \prec gc \prec e\}.$
These have duration $\recht{t}(\mathcal{O}_1) = 1.00+0.67+0.20 = 1.87$ and $\recht{t}(\mathcal{O}_2) = \max(1.00+0.50+0.20,1.00+1.00+0.20) = \max(1.70,2.20) = 2.20$. It follows that $\recht{mt}(T) = \min(1.87,2.20) = 1.87$.
\end{example}

\begin{wrapfigure}[13]{r}{4.5cm}
\vspace{-4em}
    \centering
\includegraphics[width=4cm]{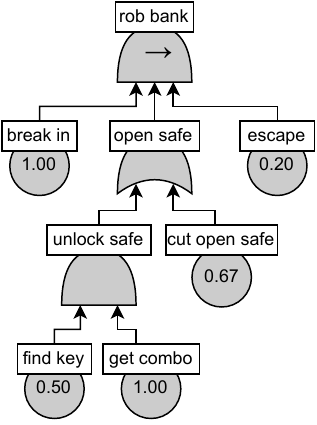}
\caption{The bank robbery DAT of Figure \ref{fig:troy} augmented with durations.}
    \label{fig:troy2}
\end{wrapfigure}
In the multiple activation scenario, \emph{min time} can be calculated by reshaping a DAT into its canonical form \cite{Jhawar2015}, from which \emph{min time} is easily calculated. However, this technique does not carry over to our formalism, as in the single activation scenario a canonical form does not exist.

\subsection{Relation to semantics of \cite{Budde2021}} \label{ssec:carlos}

In \cite{Budde2021} attacks are called attacks only if they satisfy the ordering constraints imposed by \emph{all} $\mathtt{SAND}$-gates. This is defined only for \emph{well-formed} DATs, i.e., all these constraints are simultaneously satisfiable. More formally, that work only considers attacks that we call \emph{full} in the following definition.

\begin{definition}
Let $T$ be a DAT. Define a relation $\sqsubseteq'$ on $N_{\mathtt{BAS}}$ by $a \sqsubseteq' a'$ iff there exists a node $v = \mathtt{SAND}(v_1,\ldots,v_n)$ and an $i < n$ such that $a \in B_{v_i}$ and $a' \in B_{v_{i+1}}$. Let $\sqsubseteq$ be the transitive closure of $\sqsubseteq'$. Then $T$ is \emph{well-formed} if $\sqsubseteq$ is a strict partial order. An attack $(A,\prec)$ on a well-formed DAT is \emph{full} if ${\prec} = {\sqsubseteq}|_A$, the restriction of $\sqsubseteq$ to $A$.
\end{definition}

\begin{wrapfigure}[5]{r}{2.5cm}
\centering
\vspace{-4.5em}
\includegraphics[width=2cm]{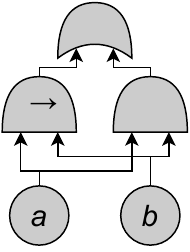}
\end{wrapfigure}
However, not all attacks will be full, because an attack may not need to reach all $\mathtt{SAND}$-nodes in order to reach the root, and non-reached nodes should not put restrictions on attacks. Consider the well-formed DAT on the right. Only $(\{a,b\},\{(a,b)\})$ is a full successful attack. However, $(\{a,b\},\varnothing)$ is a successful attack as well. Hence non-full attacks are needed to fully describe the semantics of well-formed DATs, which motivates Def.~\ref{def:attack}. Furthermore, our definition defines the semantics of general DATs, not just the well-formed ones. 

\section{An MILP approach to \emph{min time}} \label{sec:milp}



This section describes a novel method to compute $\recht{mt}(T)$ based on mixed-integer linear programming (MILP). Although MILP is NP-complete, a number of good heuristics and solvers exist specifically for MILP, which can result in a low computation time. We first show that \emph{min time} can be found by solving an optimization problem in Theorem \ref{thm:ta}, and then we describe how that optimization problem can be rewritten into the MILP framework.


The building block of the new approach is the notion of {\em time assignment}, which assigns to each node a completion time $f_v$ that respects all timing constraints in the DAT. If $f_v = \infty$ then $v$ is not reached at all. The formal definition is stated below; recall that $B_v$ is the set of BAS-descendants of $v$, and $N$ the set of nodes in the attack tree. 

\begin{definition} \label{def:ta}
Let $T$ be a DAT. For a node $v$ with children $v_1,\ldots,v_n$ and $i < n$, define $Z^v_i := B_{v_i} \times B_{v_{i+1}}$. A \emph{time assignment} is a vector $f \in [0,\infty]^N$ satisfying:
\begin{enumerate}
    \item For each $a \in N_{\mathtt{BAS}}$ one has $f_a\geq d(a)$;
    \item For each $v = \mathtt{OR}(v_1,\ldots,v_n)$ one has $f_v \geq \min_i f_{v_i}$;
    \item For each $v = \mathtt{AND}(v_1,\ldots,v_n)$ one has $f_v \geq \max_i f_{v_i}$;
    \item For each $v = \mathtt{SAND}(v_1,\ldots,v_n)$, the following must hold:
    \begin{enumerate}
        \item it holds that $f_v \geq f_{v_n}$;
        \item If there is a $i \leq n$ such that $f_{v_i} = \infty$, then $f_v = \infty$;
        \item If there exist $i <n$ and $(a,a') \in Z_i^v$ such that $f_{a'}-d(a') < f_a < \infty$, then $f_v = \infty$.
    \end{enumerate}
\end{enumerate}
The set of all time assignments for $T$ is denoted $\mathcal{F}_T$.

\end{definition}


The $\mathtt{SAND}$-conditions can be understood as follows. 4a) tells us that $v$ cannot be reached before $v_n$, and 4b) tells us that $v$ cannot be reached if any of its children is not reached. 4c) conveys that whenever there is an $a \in B_{v_i}$ that is activated (i.e., $f_a < \infty$), then in order for $v$ to be activated, one must have $f_{a'} - d(a') \geq f_a$ for all $a' \in B_{v_{i+1}}$. Since $f_{a'} - d(a')$ is the starting time of $a'$, this means that $a'$ must be started after $a$ is finished activating. It is more subtle than simply requiring $f_{a'} - d(a') \geq f_a$ for all $(a,a') \in Z_i^v$; that would ensure that all $\mathtt{SAND}$-gates impose ordering restrictions, not just those that are activated.

Note that $f_{a'} - d(a')$ is the starting time of a BAS $a'$, so 4c) tells us that $v$ is only reached if the BAS-descendants
of $v_{i+1}$ are started once those of $v_i$ have been completed. We allow for a delay in completing node $v$, even when enough of its children have been completed. Time assignments relate to \emph{min time}:

\begin{theorem} \label{thm:ta} $\recht{mt}(T) = \min_{f \in \mathcal{F}_T} f_{\recht{R}_T}$.
\end{theorem}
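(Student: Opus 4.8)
The plan is to prove the two inequalities $\recht{mt}(T) \leq \min_{f \in \mathcal{F}_T} f_{\recht{R}_T}$ and $\recht{mt}(T) \geq \min_{f \in \mathcal{F}_T} f_{\recht{R}_T}$ separately, by passing back and forth between successful attacks and time assignments.

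\emph{From attacks to time assignments ($\geq$).} Given a successful attack $\mathcal{O} = (A,\prec) \in \mathcal{S}_T$, I would construct a time assignment $f$ with $f_{\recht{R}_T} \leq \recht{t}(\mathcal{O})$. The natural choice is: for $a \in N_{\mathtt{BAS}}$, set $f_a = \max_{C} \sum_{a' \in C} d_{a'}$ where $C$ ranges over maximal chains of $\mathcal{O}$ ending at $a$ (and $f_a = \infty$ if $a \notin A$); then propagate upward, setting $f_v = \min_i f_{v_i}$ for $\mathtt{OR}$-nodes, $f_v = \max_i f_{v_i}$ for $\mathtt{AND}$-nodes, and $f_v = f_{v_n}$ for $\mathtt{SAND}$-nodes (with the convention that we push to $\infty$ whenever forced). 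One then checks \eqref{eq:ta1}--\eqref{eq:ta6} hold: \eqref{eq:ta1}--\eqref{eq:ta4} are immediate from the construction, \eqref{eq:ta5} holds because a $\mathtt{SAND}$-node with an unreached child is itself unreached and we can set it to $\infty$, and \eqref{eq:ta6} is where the ordering condition in clause (4) of Definition \ref{def:success} gets used: if $v = \mathtt{SAND}(v_1,\ldots,v_n)$ is reached by $\mathcal{O}$, then $a \prec a'$ for all relevant $a \in A \cap B_{v_i}$, $a' \in A \cap B_{v_{i+1}}$, which forces $f_{a'} - d_{a'} \geq f_a$, so the hypothesis of \eqref{eq:ta6} is vacuous; if $v$ is not reached we may set $f_v = \infty$ and \eqref{eq:ta6} is satisfied trivially. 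Finally $f_{\recht{R}_T} \leq \recht{t}(\mathcal{O})$ because at BASes $f_a$ equals the longest chain ending at $a$, which is at most $\recht{t}(\mathcal{O})$, and this bound only shrinks under $\min$/$\max$ propagation. Taking the minimum over $\mathcal{O} \in \mathcal{S}_T$ gives one direction.

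\emph{From time assignments to attacks ($\leq$).} Conversely, given $f \in \mathcal{F}_T$ with $f_{\recht{R}_T} < \infty$, I would build a successful attack $\mathcal{O}$ with $\recht{t}(\mathcal{O}) \leq f_{\recht{R}_T}$. Let $A = \{a \in N_{\mathtt{BAS}} : f_a < \infty\}$ (or a suitable subset that still reaches the root), and define $\prec$ on $A$ by $a \prec a'$ iff $f_a \leq f_{a'} - d_{a'}$, i.e.\ $a$ finishes no later than $a'$ starts. One checks this is a strict partial order (irreflexivity from $d_{a'} > 0$ needs a small argument, or one handles $d_{a'} = 0$ separately; transitivity is straightforward). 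Then one shows by induction on the DAG structure that $\mathcal{O}$ reaches every node $v$ with $f_v < \infty$: the $\mathtt{OR}$ and $\mathtt{AND}$ cases follow from \eqref{eq:ta2}--\eqref{eq:ta3}, and the $\mathtt{SAND}$ case uses \eqref{eq:ta4}--\eqref{eq:ta6} — in particular \eqref{eq:ta6} being \emph{not} triggered is exactly what guarantees $a \prec a'$ for the relevant pairs, matching clause (4) of Definition \ref{def:success}. Hence $\mathcal{O} \in \mathcal{S}_T$. Finally $\recht{t}(\mathcal{O}) \leq f_{\recht{R}_T}$: any maximal chain $a_1 \prec a_2 \prec \cdots \prec a_k$ in $\mathcal{O}$ satisfies $\sum_j d_{a_j} \leq f_{a_k}$ by telescoping the inequalities $f_{a_{j+1}} \geq f_{a_j} + d_{a_{j+1}}$, and $f_{a_k} \leq f_{\recht{R}_T}$ since completion times do not increase as we move up through $\min$/$\max$/$\mathtt{SAND}$ gates along a path from $\recht{R}_T$ down to $a_k$.

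\emph{Main obstacle.} The delicate point is the $\leq$ direction, specifically verifying that the partial order $\prec$ defined from $f$ makes all the relevant $\mathtt{SAND}$-nodes reached in the sense of Definition \ref{def:success}, and simultaneously that $\recht{t}(\mathcal{O})$ does not overshoot $f_{\recht{R}_T}$ — these pull in opposite directions (a coarser $\prec$ helps reach $\mathtt{SAND}$-nodes but risks longer chains). The edge case $d_a = 0$ for some BAS, which can break strictness/antisymmetry of $\prec$ as defined, will likely need a dedicated perturbation or tie-breaking argument. I also expect that, because "successful" is not monotone (as the paper notes after Definition \ref{def:success}), one cannot simply take $A$ to be all finite-$f$ BASes but must prune to a subset still reaching $\recht{R}_T$; the bottom-up construction in the $\geq$ direction and a careful "witnessing subtree" argument in the $\leq$ direction should handle this.
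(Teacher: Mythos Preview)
Your approach is essentially the paper's: construct a time assignment from a successful attack by setting $f_a$ to the length of the longest chain ending at $a$ and propagating, and conversely construct an attack from a time assignment by declaring $a \prec a'$ iff $f_a \leq f_{a'} - d_{a'}$. The paper packages this as two lemmas (one for each direction), after first reducing to \emph{exact} time assignments (those meeting \eqref{eq:ta2}--\eqref{eq:ta4} with equality), which streamlines the bookkeeping.

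There is, however, a genuine error in your $\leq$ direction. With $A = \{a : f_a < \infty\}$ the claim ``$f_{a_k} \leq f_{\recht{R}_T}$ since completion times do not increase as we move up'' is false: at an $\mathtt{OR}$-gate $f_v = \min_i f_{v_i}$, so going \emph{down} from the root along the non-minimal branch can land on a BAS $a$ with $f_a$ arbitrarily larger than $f_{\recht{R}_T}$. Concretely, if $\recht{R}_T = \mathtt{OR}(a_1,a_2)$ with $d_{a_1}=1$, $d_{a_2}=10$, then $f_{\recht{R}_T}=1$ but the singleton chain $\{a_2\}$ already has duration $10$. The paper's fix is to take $A = \{a : f_a \leq f_{\recht{R}_T}\}$; then every $a_k$ at the top of a chain automatically satisfies $f_{a_k} \leq f_{\recht{R}_T}$, and your telescoping argument goes through. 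Note also that your diagnosis of the obstacle is off: non-monotonicity of success is \emph{not} the problem here --- with your $\prec$ the attack on $\{a : f_a < \infty\}$ \emph{is} successful, because condition \eqref{eq:ta6} not firing gives exactly the needed $a \prec a'$ relations. The only reason to prune is the time bound. Your worry about $d_a = 0$ breaking irreflexivity of $\prec$ is legitimate, and the paper's construction has the same issue; a strict inequality or a small perturbation handles it.
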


This result allows us to calculate $\recht{mt}(T)$ by solving the following optimization problem.
\begin{align}
\recht{minimize}_{f \in [0,\infty]^N} \ \ & f_{\recht{R}_T} \quad \quad \quad \textrm{s.t.} \ \ f \in \mathcal{F}_T. \label{eq:opt}
\end{align}
This is not a linear problem, due to the nonlinear constraints of Definition \ref{def:ta}. We use auxiliary integer variables to linearize these constraints.
First, we need to get rid of the $\infty$ in Definition \ref{def:ta}, which we do by replacing it with a suitably large real number. Define the constant $M = 1+\sum_{a \in N_{\mathtt{BAS}}} d(a).$
The following lemma shows that if $T$ is satisfiable, then to minimize \eqref{eq:opt} one can focus on the $f$ with $f_v \in [0,M-1] \cup \infty$.

\begin{lemma} \label{lem:maxm}
There is an $f$ minimizing \eqref{eq:opt} for which $\forall v\colon f_v \in [0,M-1] \cup \infty$.
\end{lemma}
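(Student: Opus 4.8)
The plan is to start from an arbitrary time assignment $f \in \mathcal{F}_T$ that is optimal for \eqref{eq:opt} (such an $f$ exists by Theorem \ref{thm:ta}, since $\recht{mt}(T) = \min_{f \in \mathcal{F}_T} f_{\recht{R}_T}$ is attained), and transform it into one that takes values only in $[0,M-1] \cup \{\infty\}$ without increasing $f_{\recht{R}_T}$ and without leaving $\mathcal{F}_T$. The key observation is that the optimal value is either $\infty$ (if $\mathcal{S}_T = \varnothing$, equivalently $\mathcal{F}_T$ has no $f$ with $f_{\recht{R}_T} < \infty$) or is some finite value $\recht{mt}(T) \leq \sum_{a \in N_{\mathtt{BAS}}} d_a = M-1$, the bound coming from the fact that for any successful attack $\mathcal{O}$, every maximal chain $C$ in $\mathcal{O}$ satisfies $\sum_{a \in C} d_a \leq \sum_{a \in N_{\mathtt{BAS}}} d_a$, so $\recht{t}(\mathcal{O}) \leq M-1$. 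Thus in the finite case $f_{\recht{R}_T} \leq M-1$ already, and only the values at \emph{other} nodes need to be brought down.

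The transformation I would use is the following: given an optimal $f$, define $f'_v = f_v$ if $f_v \leq M-1$, and $f'_v = \infty$ otherwise (i.e.\ "round up" any finite value exceeding $M-1$ to $\infty$). I would then verify that $f' \in \mathcal{F}_T$ by checking each constraint \eqref{eq:ta1}--\eqref{eq:ta6} in turn. Constraint \eqref{eq:ta1}: for a BAS $a$, $d_a \leq M-1$, so either $f_a \leq M-1$ and $f'_a = f_a \geq d_a$, or $f'_a = \infty \geq d_a$. Constraint \eqref{eq:ta2} ($\mathtt{OR}$): $\min_i f'_{v_i} \geq \min_i f_{v_i}$ only if no child got rounded up, but actually $\min_i f'_{v_i} \in \{\min_i f_{v_i}, \infty\}$ depending on whether the minimizing child was rounded; in either case, if $f_v \leq M-1$ then $\min_i f_{v_i} \leq f_v \leq M-1$ so that child was not rounded and $f'_v = f_v \geq \min_i f'_{v_i}$; if $f_v > M-1$ then $f'_v = \infty$ and the constraint holds trivially. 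Constraints \eqref{eq:ta3} ($\mathtt{AND}$) and \eqref{eq:ta4} ($\mathtt{SAND}$, $f_v \geq f_{v_n}$): similar, if $f_v \leq M-1$ then the relevant children are $\leq M-1$ hence unchanged. Constraint \eqref{eq:ta5}: if some $f'_{v_i} = \infty$, then either $f_{v_i} = \infty$ (so $f_v = \infty$ hence $f'_v = \infty$) or $f_{v_i} > M-1$ finite; in the latter case I need $f'_v = \infty$, which holds because $f_v \geq f_{v_n}$... no — I need a different argument: actually if $f_{v_i}$ is finite and $> M-1$, does \eqref{eq:ta4} force $f_v > M-1$? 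Only for $i = n$. So here is the subtlety.

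The main obstacle is constraint \eqref{eq:ta5}: a $\mathtt{SAND}$-node $v$ could have a child $v_i$ with $i < n$ whose value $f_{v_i}$ is finite but larger than $M-1$, while $f_v$ itself is $\leq M-1$ — is that possible in an optimal $f$? I would argue it is not, or rather, I would preprocess $f$ to eliminate it. The cleanest route is: before rounding, first replace $f$ by the \emph{pointwise minimal} time assignment $\tilde f \leq f$ with $\tilde f_{\recht{R}_T} = f_{\recht{R}_T}$, or more concretely, observe via Theorem \ref{thm:ta}'s proof (or re-derive) that an optimal $f$ can be taken to arise from an actual successful attack $\mathcal{O}$, by setting $f_a = $ (length of longest chain in $\mathcal{O}$ ending at $a$) for $a \in A$, $f_a = \infty$ for $a \notin A$, and $f_v$ for internal nodes the natural propagated value ($\min$, $\max$, or $f_{v_n}$), with unreached nodes getting $\infty$. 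For such an $f$, every \emph{finite} $f_v$ equals a sum of durations along a chain in $\mathcal{O}$, hence is $\leq M-1$ automatically, so no rounding is even needed and $f \in \mathcal{F}_T$ already has the desired form. I would therefore structure the proof as: (i) handle the unsatisfiable case ($\recht{mt}(T) = \infty$) where $f \equiv \infty$ works; (ii) in the satisfiable case, take a minimal successful attack $\mathcal{O} \in \mathcal{G}_T$ achieving $\recht{mt}(T)$ (Proposition \ref{prop:gt}), build the associated $f$ as above, check $f \in \mathcal{F}_T$ with $f_{\recht{R}_T} = \recht{t}(\mathcal{O}) = \recht{mt}(T)$ (this is essentially contained in the proof of Theorem \ref{thm:ta}), and note every finite coordinate is a sub-sum of $\sum_a d_a$, hence in $[0, M-1]$. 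The only real work is the bookkeeping that the propagated internal values stay $\leq M-1$, which follows since $f_v$ for a reached internal node equals $f_{v_i}$ for some reached child (or is dominated by $\max$ over reached children, each $\leq M-1$).
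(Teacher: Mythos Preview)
Your final approach---take an optimal successful attack $\mathcal{O}$, build the associated time assignment $f_{\mathcal{O}}$ as in the proof of Theorem~\ref{thm:ta} (specifically Lemma~\ref{lem:ta3}), and observe that every finite coordinate is a chain-sum hence $\leq M-1$---is correct and is exactly what the paper does (the paper routes through Lemmas~\ref{lem:ta2} and~\ref{lem:ta3} to produce $f_{\mathcal{O}}$ rather than invoking Proposition~\ref{prop:gt}, but the construction is the same). The rounding detour in your first two paragraphs is unnecessary and, as you correctly diagnosed, breaks on constraint~\eqref{eq:ta5}; you can drop it entirely.
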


This shows that we can use $M$ to play the role of $\infty$ where necessary. We enforce this by demanding $f_v \in [0,M]$, and we interpret $f_v = M$ to mean that $v$ is not reached. For a node $v$, let $n_v$ be its number of children, which are denoted $v_1,\ldots,v_{n_v}$. We then use standard MILP techniques \cite{Chen2011} to rewrite Definition \ref{def:ta}.

To rewrite the $\mathtt{OR}$-condition, we introduce an auxiliary binary variable $x_i^v$ for each
$v \in N_{\mathtt{OR}}$ and each $i \leq n_v$. The purpose of $x_i^v$ is to represent the truthfulness of the statement ``$i = \arg\min_{i'} f_{v_{i'}}$''. 
We can then represent $f_v \geq \min_i f_{v_i}$ by
\begin{align*}
\sum_{i\leq n_v} x_i^v &\geq 1, & \forall i \leq n_v\colon f_v &\geq f_{v_i} + M(x_i^v-1).
\end{align*}
The latter is automatically satisfied if $x_i^v = 0$, and reduces to $f_v \geq f_{v_i}$ if $x_i^v = 1$. The former ensures that the latter must happen for at least one $i$, so together these encode $f_v \geq \min_i f_{v_i}$. The condition for $\mathtt{AND}$-gates can be rewritten as $\forall i \leq n_v\colon f_v \geq f_{v_i}$.

Finally, we consider $\mathtt{SAND}$-gates. For $v \in N_{\mathtt{SAND}}$, we introduce an auxiliary binary variable $y^v$ that encodes ``$\exists i < n\colon f_{v_i} = \infty$ or $\exists i\exists (a,a') \in Z_i^v\colon f_{a'}-d(a') < f_a < \infty$.'' Then we can write Definition \ref{def:ta}.4 as
$f_{v} \geq f_{v_{n_v}}, 
f_v \geq My^v.$ To ensure $y^v = 1$ whenever one of the $f_{v_i}$ equals $\infty$, we add the constraint $\forall i < n_v\colon y^v \geq \tfrac{1+f_{v_i}-M}{M}$,
which forces $y^v = 1$ only when $f_{v_i} > M-1$. Furthermore, to ensure $y^v = 1$ whenever some $a,a'$ satisfy $f_{a'}-d(a') < f_a$, we would like to add the constraint
\begin{align} \label{eq:sand1}
&\forall i < n_v \forall (a,a') \in Z_i^v\colon y^v \geq \min\left\{\tfrac{f_a-f_{a'}+d(a')}{M},\tfrac{M-f_a}{M}\right\}. 
\end{align}
This forces $y^v = 1$ only when both $f_{a'}-d(a') < f_a$ and $f_a < M$. To get rid of the minimum, we introduce an auxiliary variable $z^v_{i,a,a'}$ for each $i < n_v$ and $(a,a') \in Z_i^v$ as we did for the $\mathtt{OR}$-condition. We then replace \eqref{eq:sand1} with
\begin{align*}
\forall i < n_v \forall (a,a') \in Z_i^v\colon y^v &\geq \tfrac{f_a-f_{a'}+d(a')}{M}-z^v_{i,a,a'},\ \ y^v \geq\tfrac{M-f_a}{M}-(1-z^v_{i,a,a'}).
\end{align*}

Taking all of this together, it can be shown that the constraint $f_v \in [0,M]$ holds automatically for all `reasonable' $f$ (i.e., if this does not hold for $f$, then $f$ will not minimize $f_{\recht{R}_T}$) and can be replaced by $f_v \in \mathbb{R}$. We then find that the optimization problem \eqref{eq:opt} can be rewritten into the following MILP problem of Figure \ref{fig:milp}. Note that this optimization returns an $f$ with $f_{\recht{R}_T} \leq M-1$ if and only if $\mathcal{S}_T \neq \varnothing$. Hence this optimization can also be used to determine whether $T$ can successfuly be attacked.

We note that this is not the only way to encode \MinTime analysis into a MILP problem; for instance, instead of using the constant $M$, one could introduce an additional binary variable per node that denotes whether the node is activated or not. We chose for this approach since this ensures we need fewer optimization variables, even though this means that some equations such as \eqref{eq:sand1} are less intuitive. Note that we get quadratically many constraints above, which is a consequence of the fact that we get a constraint for every pair $(a,a')$ in Definition \ref{def:success}.4.


\begin{figure*}
\begin{align*}
&\recht{minimize } f_{\recht{R}_T} \recht{\ subject \ to:}\\
&\forall v \in N&\colon &f_v \in \mathbb{R},\\
&\forall a \in N_{\mathtt{BAS}}&\colon &f_a \geq d(a),\\
&\forall v \in N_{\mathtt{OR}},\ \forall i \leq n_v&\colon &x^v_i \in \{0,1\},\\
&\forall v \in N_{\mathtt{OR}},\ \forall i \leq n_v&\colon &f_v \geq f_{v_i}+M(x^v_i-1),\\
&\forall v \in N_{\mathtt{OR}}&\colon &\sum_{i\leq n_v} x_i^v \geq 1,\\
&\forall v \in N_{\mathtt{AND}},\ \forall i \leq n_v&\colon &f_v \geq f_{v_i},\\
&\forall v \in N_{\mathtt{SAND}}&\colon &y^v \in \{0,1\},\\
&\forall v \in N_{\mathtt{SAND}},\ \forall i < n_v,\  \forall (a,a') \in Z^v_i&\colon &z^v_{i,a,a'} \in \{0,1\},\\
&\forall v \in N_{\mathtt{SAND}}&\colon&f_v \geq f_{v_{n_v}},\\ 
&\forall v \in N_{\mathtt{SAND}}&\colon &f_v \geq My^v,\\
&\forall v \in N_{\mathtt{SAND}},\ \forall i < n_v&\colon &y^v \geq \tfrac{1+f_{v_i}-M}{M},\\
&\forall v \in N_{\mathtt{SAND}},\ \forall i < n_v,\ \forall (a,a') \in Z^v_i&\colon &y^v \geq \tfrac{f_a-f_{a'}+d(a')}{M}-z^v_{i,a,a'} ,\\
&\forall v \in N_{\mathtt{SAND}},\ \forall i < n_v,\ \forall (a,a') \in Z^v_i&\colon &y^v \geq \tfrac{M-f_a+1}{M}+z^v_{i,a,a'}-1.\\
\end{align*}
\vspace{-2.5em}
\caption{The MILP problem for calculating \emph{min time}.} \label{fig:milp}
\end{figure*}





\section{Computation time reduction} \label{sec:mod}

In this section, we introduce an algorithm reducing the complexity of computing $\recht{mt}(T)$. The algorithm consists of two components: First, we show that a bottom-up algorithm from \cite{Jhawar2015} can be used to calculate \emph{min time} for static (no $\mathtt{SAND}$-gates) and tree-shaped DATs. As the state of the art method, based on binary decision diagrams \cite{Budde2021}, has exponential complexity, and the bottom-up algorithm has linear complexity, this is a big improvement. Second, we split up the calculation of \emph{min time} into parts by identifying the \emph{modules} of a DAT, i.e. subDAGs that are connected to the rest of the DAT via only one node. 


\subsection{Bottom-up computation}

\begin{wrapfigure}[11]{r}{6cm}
\vspace{-6em}
	\KwIn{Dynamic attack tree $T$, duration vector $d \in \mathbb{R}^{N_{\mathtt{BAS}}}$}
	\KwOut{Potential min time $\recht{mt}(T,d)$.}
	\BlankLine
	\uIf{$\gamma(v) = \mathtt{BAS}$}{%
		\Return{$d(v)$}
	} \uElseIf{$\gamma(v) = \mathtt{OR}$}{%
		\Return{$\min_{v' \in \recht{ch}(v)} \mathtt{MT\text{-}{BU}}(T_{v'},d|_{B_{v'}})$}
	} \uElseIf{$\gamma(v) = \mathtt{AND}$}{%
		\Return{$\max_{v' \in \recht{ch}(v)} \mathtt{MT\text{-}{BU}}(T_{v'},d|_{B_{v'}})$}
	} \Else(\tcp*[h]{$\gamma(v) = \mathtt{SAND}$}) {%
		\Return{$\sum_{v' \in \recht{ch}(v)} \mathtt{MT\text{-}{BU}}(T_{v'},d|_{B_{v'}})$}
	}
	\caption{$\mathtt{MT\text{-}{BU}}$ for a DAT $T$.}
	\label{alg:BU}
\end{wrapfigure}

An important tool is the algorithm $\mathtt{MT\text{-}BU}$ introduced in \cite{Jhawar2015} presented in Algorithm \ref{alg:BU}. It attempts to calculate $\recht{mt}(T)$ by traversing $T$ bottom-up, which only has linear time complexity and is significantly faster than the MILP approach of Figure \ref{fig:milp}. For tree-shaped $T$ it calculates \MinTime correctly, but for DAGs it fails to account for the fact that two children of a node may share BAS, which may be counted double. However, this double counting is only an issue for $\mathtt{SAND}$-gates, as the operators $\min$/$\max$ of $\mathtt{OR}$/$\mathtt{AND}$-gates are idempotent, i.e., $\min(x,x) = \max(x,x) = x$. This was first realized in \cite{Kordy2018}, for attack-defense trees under different semantics. However, \MinTime based on these \emph{set semantics} can be proven to be equivalent to our definition in \ref{ssec:mt}, yielding the following result:

\begin{theorem} \emph{(\cite{Jhawar2015,Kordy2018})}
If $T$ is tree-shaped or static, then $\mathtt{MT\text{-}BU}$ calculates $\recht{mt}(T)$.
\end{theorem}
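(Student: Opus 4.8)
The plan is to prove, by structural induction on the treelike DAT, the stronger statement that $\mathtt{MT\text{-}BU}(T_v, d|_{B_v}) = \recht{mt}(T_v, d|_{B_v})$ for every node $v$; the theorem is the special case $v = \recht{R}_T$. The single feature of treelike DATs that drives the argument is that, since every node has at most one parent, the subDAGs $T_{v_1},\dots,T_{v_n}$ rooted at the children of a node $v$ are pairwise vertex-disjoint, so $B_v = B_{v_1} \sqcup \dots \sqcup B_{v_n}$; consequently any attack $\mathcal{O}$ on $T_v$ decomposes as the disjoint union of its restrictions $\mathcal{O}|_{B_{v_i}}$, and conversely attacks on the $T_{v_i}$ can be recombined at will. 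I will also use two easy auxiliary facts: $\mathcal{S}_{T_v}$ is finite (finitely many strict orders on a finite BAS-set), so each $\recht{mt}(T_v)$ is attained by some successful attack; and reaching a node requires activating at least one of its BAS-descendants (a short sub-induction on the recursion in Definition~\ref{def:success}), so that every child of a reached node contributes a nonempty block.

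The base case $\gamma(v) = \mathtt{BAS}$ is immediate: $(\{v\},\varnothing)$ is the only, hence optimal, successful attack, with duration $d_v$, matching the algorithm. For $v = \mathtt{OR}(v_1,\dots,v_n)$: a successful attack $\mathcal{O}$ on $T_v$ reaches some $v_i$, so $\mathcal{O}|_{B_{v_i}}$ is successful on $T_{v_i}$ and $\recht{t}(\mathcal{O}) \geq \recht{t}(\mathcal{O}|_{B_{v_i}}) \geq \recht{mt}(T_{v_i})$ by monotonicity of $\recht{t}$; conversely any successful attack on a single $T_{v_i}$ is already one on $T_v$, so $\recht{mt}(T_v) = \min_i \recht{mt}(T_{v_i})$, which equals the algorithm's output by the inductive hypothesis. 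For $v = \mathtt{AND}(v_1,\dots,v_n)$: a successful $\mathcal{O}$ restricts to successful $\mathcal{O}_i$ on each $T_{v_i}$ with $\recht{t}(\mathcal{O}) \geq \max_i \recht{t}(\mathcal{O}_i)$; conversely, choosing each $\mathcal{O}_i$ optimal and forming the disjoint union $\mathcal{O} = \bigsqcup_i \mathcal{O}_i$ with no order relations across blocks, $\mathcal{O}$ reaches $v$ and every maximal chain of $\mathcal{O}$ lies within one block, so $\recht{t}(\mathcal{O}) = \max_i \recht{t}(\mathcal{O}_i) = \max_i \recht{mt}(T_{v_i})$. Hence $\recht{mt}(T_v) = \max_i \recht{mt}(T_{v_i})$, as returned.

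The substantive case is $v = \mathtt{SAND}(v_1,\dots,v_n)$, where I must show $\recht{mt}(T_v) = \sum_i \recht{mt}(T_{v_i})$. For ``$\leq$'', take an optimal successful attack $\mathcal{O}_i = (A_i,\prec_i)$ on each $T_{v_i}$ and let $\mathcal{O}$ have BAS-set $\bigsqcup_i A_i$ and strict order the transitive closure of $\bigcup_i \prec_i$ together with all pairs $(a,a')$ with $a \in A_i$, $a' \in A_j$, $i < j$. Using disjointness of the $A_i$, the cross-block edges only go ``forward'', so this relation has no cycles and is a strict partial order; its restriction to each $B_{v_i}$ is exactly $\mathcal{O}_i$, so $\mathcal{O}$ reaches every $v_i$, and the added edges are precisely the ordering condition in the $\mathtt{SAND}$-case of Definition~\ref{def:success}, so $\mathcal{O}$ reaches $v$. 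Moreover every maximal chain of $\mathcal{O}$ is a concatenation of maximal chains of $\mathcal{O}_1,\dots,\mathcal{O}_n$, giving $\recht{t}(\mathcal{O}) = \sum_i \recht{t}(\mathcal{O}_i) = \sum_i \recht{mt}(T_{v_i})$. For ``$\geq$'', let $\mathcal{O} = (A,\prec)$ be any successful attack on $T_v$ and $\mathcal{O}_i := \mathcal{O}|_{B_{v_i}}$, each successful on $T_{v_i}$, and pick in each $\mathcal{O}_i$ a maximal chain $C_i$ with $\sum_{a \in C_i} d_a = \recht{t}(\mathcal{O}_i)$. The $\mathtt{SAND}$-condition gives $a \prec a'$ for all $a \in A_i$, $a' \in A_{i+1}$, and since each intermediate block $A_k$ is nonempty, chaining through one element per block and using transitivity yields $a \prec a'$ for all $a \in A_i$, $a' \in A_j$ with $i < j$; hence $C_1 \cup \dots \cup C_n$ is a chain in $\mathcal{O}$, so $\recht{t}(\mathcal{O}) \geq \sum_i \recht{t}(\mathcal{O}_i) \geq \sum_i \recht{mt}(T_{v_i})$. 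Combining the two bounds with the inductive hypothesis closes the case, and with it the induction.

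The main obstacle is the $\mathtt{SAND}$ case. In the ``$\leq$'' direction the delicate points are checking that the recombined relation is genuinely a strict partial order and that restricting it to each $B_{v_i}$ recovers $\mathcal{O}_i$ — so that the ``not monotone'' phenomenon of Figure~\ref{fig:nonmono}, which relies on BASes shared between branches and hence cannot arise in a treelike DAT, does not spoil reachability. In the ``$\geq$'' direction the key step is the chaining argument that upgrades the block-by-block ordering guaranteed by the $\mathtt{SAND}$-condition into full comparability between any two distinct blocks, so that the separately chosen optimal chains $C_i$ concatenate into a single chain of $\mathcal{O}$. The $\mathtt{OR}$, $\mathtt{AND}$ and $\mathtt{BAS}$ cases are routine once disjointness of the child subtrees and monotonicity of $\recht{t}$ are in hand.
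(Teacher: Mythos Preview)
The paper does not give its own proof of this theorem: it is quoted from \cite{Budde2021} and stated without argument, so there is nothing to compare against. Your proposal supplies a correct, self-contained structural induction; the key observations --- that in a treelike DAT the child subDAGs $T_{v_i}$ are vertex-disjoint, that whether an attack reaches a node $w$ in $T_{v_i}$ depends only on its restriction to $B_{v_i}$, and the transitivity-and-nonemptiness chaining that upgrades the consecutive-block ordering in the $\mathtt{SAND}$ condition to full cross-block comparability --- are all sound, and the time computations for the combined attacks in the $\mathtt{AND}$ and $\mathtt{SAND}$ cases are correct.
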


\subsection{Modular analysis}

\begin{wrapfigure}[8]{r}{6cm}
\centering
\vspace{-4.5em}
\begin{tikzpicture}
\draw[fill = black!30] (0,0) -- (-1,-1.5) -- (1,-1.5) -- cycle;
\draw[black!30, fill = black!15] (0.5,-1.5) -- (1,-2.25) -- (0,-2.25) -- cycle;
\filldraw (0,0) circle (2pt);
\draw[fill = black!15] (0.5,-1.5) circle (2pt);
\draw (-0.7,0) node {$T$};
\draw (0.4,0) node {$\recht{R}_T$};
\draw (0.5,-1.3) node {$v$};
\draw[fill = black!30] (3,0) -- (2,-1.5) -- (4,-1.5) -- cycle;
\filldraw (3,0) circle (2pt);
\draw[fill = black!15] (3.5,-1.5) circle (2pt);
\draw (2.5,0) node {$T^v$};
\draw (3.4,0) node {$\recht{R}_T$};
\draw (3.5,-1.25) node {$\tilde{v}$};
\draw[black!30, fill = black!15] (3.5,-2) -- (4,-2.75) -- (3,-2.75) -- cycle;
\draw[fill = black!15] (3.5,-2) circle (2pt);
\draw (3.8,-2) node {$v$};
\draw (3.1,-2) node {$T_v$};
\end{tikzpicture}
\caption{Modular analysis.} \label{fig:mod}
\end{wrapfigure}
Algorithm \ref{alg:BU} only reduces complexity in the two relatively rare cases where the DAT is static or tree-shaped. However, it is possible to also reduce complexity when $T$ is only partially static and/or tree-shaped. A well-established method in studying DATs is to consider the \emph{modules} of $T$:

\begin{definition}\emph{\cite{Dutuit1996}}
A \emph{module} is a node $v \in N \setminus N_{\mathtt{BAS}}$ such that all paths from $T\setminus T_v$ to $T_v$ pass through $v$.
\end{definition} 


The root of $T$ is always a module. If $v$ is a module, then $v$ is the only node within $T_v$ with parents outside of $T_v$.
Hence we can create a tree $T^v$ by replacing $T_v$ within $T$ by a new single BAS $\tilde{v}$; the parents of $\tilde{v}$ in $T^v$ are the parents of $v$ in $T$. Theorem \ref{thm:prune} shows that \emph{min time} can be calculated for $T$ by first calculating it for $T_v$, and then for $T^v$. This is depicted in Figure \ref{fig:mod}.

\begin{theorem} \label{thm:prune}
Let $T$ be a DAT, and let $v$ be a module of $T$. Let $T^v$ be the node obtained by removing $v$ and replacing $v$ itself with a new BAS $\tilde{v}$. Then $\recht{mt}(T,d) = \recht{mt}(T^v,d^v)$where $d^v$ is a duration function for $T^v$ given by
\begin{equation*}
d^v(a) = \begin{cases}
d(a), & \textrm{if } a \in N_{\mathtt{BAS}}\setminus B_v,\\
\recht{mt}(T_v,d|_{B_v}), & \textrm{if } a = \tilde{v}.
\end{cases}
\end{equation*}

\end{theorem}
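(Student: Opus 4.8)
The plan is to prove the equality by establishing two inequalities, using the time-assignment characterization of \emph{min time} from Theorem \ref{thm:ta} rather than working directly with attacks. The main observation is that a time assignment on $T$ restricts to a time assignment on $T_v$ (on the node set $B_v \cup \{\text{internal nodes of } T_v\}$), and conversely a time assignment on $T^v$ together with a time assignment on $T_v$ can be glued into one on $T$, provided the value at $\tilde v$ matches the value at $v$.

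\medskip\noindent\textbf{Step 1: Setup via Theorem \ref{thm:ta}.} By Theorem \ref{thm:ta}, $\recht{mt}(T,d) = \min_{f \in \mathcal{F}_T} f_{\recht{R}_T}$ and $\recht{mt}(T^v,d^v) = \min_{g \in \mathcal{F}_{T^v}} g_{\recht{R}_{T^v}}$. Also $\recht{mt}(T_v,d|_{B_v}) = \min_{h \in \mathcal{F}_{T_v}} h_v$. Write $m_v := \recht{mt}(T_v,d|_{B_v})$, so $d^v_{\tilde v} = m_v$. Note $\recht{R}_{T^v} = \recht{R}_T$ unless $v = \recht{R}_T$, in which case the statement is immediate since $T^v$ is then a single BAS $\tilde v$ with $d^v_{\tilde v} = m_v = \recht{mt}(T,d)$; so assume $v \neq \recht{R}_T$.

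\medskip\noindent\textbf{Step 2: $\recht{mt}(T^v,d^v) \leq \recht{mt}(T,d)$.} Take $f \in \mathcal{F}_T$ attaining $f_{\recht{R}_T} = \recht{mt}(T,d)$. Its restriction $f|_{T_v}$ is a time assignment on $T_v$ (every constraint defining $\mathcal{F}_{T_v}$ is a constraint that $f$ already satisfies, since all children of nodes in $T_v$ lie in $T_v$), hence $f_v \geq m_v$. Now define $g$ on $T^v$ by $g_a = f_a$ for $a \notin B_v \cup \{v\}$, $g_{\tilde v} = m_v$. I claim $g \in \mathcal{F}_{T^v}$: the BAS constraint $g_{\tilde v} \geq d^v_{\tilde v} = m_v$ holds with equality, and for every node $w$ of $T^v$ having $v$ (now $\tilde v$) as a child, the relevant constraint for $w$ reads the same in $T$ and $T^v$ except that the input $f_v$ is replaced by the smaller value $m_v \leq f_v$; since the constraints \eqref{eq:ta2}, \eqref{eq:ta3}, \eqref{eq:ta4} are monotone lower bounds, decreasing a child value only relaxes them, so $g_w \geq \ldots$ still holds. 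For \eqref{eq:ta5}--\eqref{eq:ta6} one uses that $m_v < \infty$ (as $\mathcal{S}_{T_v}\neq\varnothing$ when the minimum is finite; if $m_v = \infty$ then $f_v = \infty$ too and the argument is even simpler). Because $\tilde v$ is a BAS in $T^v$ it appears in no $Z$-sets as an internal node, so the only $Z_i^w$ affected are those with $\tilde v$ as a direct BAS-descendant, and the hypothesis $f_{a'}-d_{a'} < f_a$ transfers verbatim. Hence $g \in \mathcal{F}_{T^v}$ and $g_{\recht{R}_T} = f_{\recht{R}_T} = \recht{mt}(T,d)$, giving the inequality.

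\medskip\noindent\textbf{Step 3: $\recht{mt}(T,d) \leq \recht{mt}(T^v,d^v)$.} Take $g \in \mathcal{F}_{T^v}$ attaining the minimum, and $h \in \mathcal{F}_{T_v}$ attaining $h_v = m_v$. Since $g_{\tilde v} \geq d^v_{\tilde v} = m_v = h_v$, by scaling up — replace $h$ by $h'$ with $h'_v := g_{\tilde v}$ and $h'_w := h_w$ otherwise, which stays in $\mathcal{F}_{T_v}$ because raising the value at the root $v$ only relaxes \eqref{eq:ta4} and $v$ is no node's child inside $T_v$ — we may assume $h_v = g_{\tilde v}$. Now glue: define $f$ on $T$ by $f_w = h_w$ for $w \in T_v$ and $f_w = g_w$ for $w \notin T_v$. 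This is well-defined precisely because $T_v$ and $T^v\setminus\{\tilde v\}$ partition the nodes, and it is consistent at $v$ by construction. Every constraint of $\mathcal{F}_T$ for a node inside $T_v$ holds because $h \in \mathcal{F}_{T_v}$; every constraint for a node $w$ not in $T_v$ has all its children either outside $T_v$ or equal to $v$ — here one uses that $v$ is a \emph{module}, so no node of $T^v\setminus\{\tilde v\}$ has a child strictly inside $T_v$ — and such a constraint reads identically to the $T^v$-constraint with $g_{\tilde v}$ replaced by $f_v = g_{\tilde v}$, hence holds. The only subtle point is \eqref{eq:ta6} for $\mathtt{SAND}$-nodes $w$ outside $T_v$: here $Z_i^w = B_{w_i}\times B_{w_{i+1}}$ could in principle contain BASes lying inside $B_v$; but again by the module property, any $B_v$-BAS reachable from $w$ is reachable only through $v$, and since $v$ is replaced by the single BAS $\tilde v$ in $T^v$, the set $B_v$ collapses to $\{\tilde v\}$ in $T^v$. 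This is the one place where the BAS-level granularity of \eqref{eq:ta6} interacts nontrivially with the module replacement, and it is the main obstacle: one must check that collapsing all of $B_v$ to a single BAS $\tilde v$ with completion time $g_{\tilde v} = m_v = \recht{t}$ of the fastest attack on $T_v$ does not lose or gain any ordering violations at ancestral $\mathtt{SAND}$-gates. I expect this to work because a $\mathtt{SAND}$-gate $w$ above $v$ sees $T_v$ only via its root completion time, and the fastest internally-consistent schedule of $T_v$ (realized by $h$) can always be shifted wholesale in time without breaking its internal constraints, so it can be scheduled to respect any external ordering demand that the single surrogate BAS $\tilde v$ with duration $m_v$ would satisfy. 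Concluding: $f \in \mathcal{F}_T$ and $f_{\recht{R}_T} = g_{\recht{R}_T} = \recht{mt}(T^v,d^v)$, which finishes the proof.
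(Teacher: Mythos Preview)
Your route via time assignments (Theorem \ref{thm:ta}) is genuinely different from the paper's proof, which works directly with attacks (Lemmas \ref{lem:prune1} and \ref{lem:prune2}). However, both directions of your argument break precisely at the $\mathtt{SAND}$ constraint \eqref{eq:ta6}, which you yourself flag as ``the main obstacle'' but do not actually resolve.

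In Step 2, setting $g_{\tilde v}=m_v$ is wrong. Take $T$ with root $w=\mathtt{SAND}(c,v)$, $v=\mathtt{AND}(a,b)$, $d_c=d_a=1$, $d_b=2$. Then $m_v=2$ and the optimal $f$ has $f_c=1$, $f_a=2$, $f_b=3$, $f_v=f_w=3$. Your $g$ has $g_c=1$, $g_{\tilde v}=m_v=2$, $g_w=3$; but in $T^v$ the pair $(c,\tilde v)\in Z_1^w$ satisfies $g_{\tilde v}-d^v_{\tilde v}=0<g_c=1<\infty$, so \eqref{eq:ta6} forces $g_w=\infty$, contradicting $g_w=3$. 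Thus $g\notin\mathcal{F}_{T^v}$. The claim that the hypothesis of \eqref{eq:ta6} ``transfers verbatim'' is simply false once the module sits on the \emph{later} side of an ancestral $\mathtt{SAND}$.

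In Step 3, your gluing only raises $h$ at the single node $v$, leaving the BAS values in $T_v$ untouched. On the same example, the optimal $g$ on $T^v$ has $g_{\tilde v}=3$, while the optimal $h$ on $T_v$ has $h_a=1$, $h_b=2$. Your glued $f$ then has $f_c=1$, $f_a=1$, so $(c,a)$ gives $f_a-d_a=0<f_c=1<\infty$, forcing $f_w=\infty$; hence $f\notin\mathcal{F}_T$. Your parenthetical remark that the schedule of $T_v$ ``can always be shifted wholesale in time'' is the right idea, but it must be applied to \emph{all} nodes of $T_v$ (shift by $g_{\tilde v}-m_v$), and one must additionally argue that after shifting, the $B_v$-BASes satisfy $f_b-d_b\geq g_{\tilde v}-m_v$ and $f_b\leq g_{\tilde v}$ (the latter requires choosing $h$ so that unneeded BASes are set to $\infty$, not just any optimal $h$). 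None of this is carried out. The paper sidesteps these difficulties by constructing attacks directly: replacing $\tilde v$ in a chain of $\mathcal{O}$ by a maximal chain of an optimal attack on $T_v$ (and vice versa) automatically respects the ordering constraints, because the module property guarantees that all of $B_v$ sits in the same position relative to every ancestral $\mathtt{SAND}$.
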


\begin{wrapfigure}[19]{r}{6cm}
\vspace{-3em}
	\KwIn{Dynamic  AT $T$, duration vector $d \in \mathbb{R}^{N_{\mathtt{BAS}}}$, Algorithm $\mathcal{A}$ to calculate \emph{min time}}
	\KwOut{Min time $\recht{mt}(T)$.}
	\BlankLine
	$\mathcal{V} \leftarrow \mathtt{Module}(T)$;
	
	\While{$\mathcal{V} \neq \varnothing$}{
	Pick $v \in \mathcal{V}$ of minimal height;\;
 
	\uIf{$T_v$ is static}
	{$d^v(\tilde{v}) \leftarrow \mathtt{MT\text{-}{BU}}(T_v,d|_{B_v})$;}
	\Else{$d^v(\tilde{v}) \leftarrow \mathcal{A}(T_v,d|_{B_v})$;}
	\For{$a \in N_{\mathrm{BAS}} \setminus B_v$}{
	$d^v(a) \leftarrow d(a)$;}
	$(T,d) \leftarrow (T^v,d^v)$;
	
	$\mathcal{V} \leftarrow \mathcal{V} \setminus \{v\}$;
	}
	\Return{$d({\recht{R}_T})$} \tcp*[h]{$\recht{R}_T$ is a BAS now}
\caption{$\mathcal{A}_{\mathtt{Mod}}$ for a DAT $T$. The notation $T^v,d^v$ is from Theorem \ref{thm:prune}.}
\label{alg:dattime}
\end{wrapfigure}
While the statement seems intuitively true, the proof requires quite a bit of work as one needs to develop machinery to relate attacks on $T$ (and their minimal chains) to attacks on $T_v$ and $T^v$. Theorem \ref{thm:prune} reduces complexity in two ways: We split the tree into two parts whose total size is the same as the original tree. Since MILP is NP-hard, this can impact computation time. Furthermore, the smaller DAT $T_v$ can be static or tree-shaped, in which case we can use $\mathtt{MT\text{-}{BU}}$.

The resulting algorithm is displayed in Algorithm \ref{alg:dattime}. Here $\mathtt{Module}$ refers to an algorithm that finds the modules of $T$; this can be done with linear time complexity \cite{Dutuit1996}. Algorithm $\mathcal{A}_{\mathtt{Mod}}$ makes use of an algorithm $\mathcal{A}$ that calculates \emph{min time}. For this, one can use naive enumeration or the MILP approach of Figure \ref{fig:milp}, or potentially any new algorithm. Since the calculation of a module's \textit{min time} value depends on its own modules, we act on the lower modules first, so Algorithm \ref{alg:dattime} handles the modules by ascending height. Note that when $T$ is tree-shaped, every inner node is a module, so $\mathcal{A}_{\mathtt{Mod}}$ is equivalent to $\mathtt{MT\text{-}{BU}}$ for any $\mathcal{A}$.

We note that other definitions of \emph{min time}, such as the automata-approach of \cite{Kumar2015} and the multiple-activation definition of \cite{Jhawar2015}, also allow for modular decomposition. However, as these definitions are not compatible with ours, we cannot directly use these results, and we require a novel proof for Theorem \ref{thm:prune}.

\section{Experiments} \label{sec:exp}


This section evaluates the performance of our methods. We compare the MILP approach of Figure \ref{fig:milp} ($\mathtt{MT\text{-}MILP}$) to the enumerative approach ($\mathtt{MT\text{-}Enum}$). For the latter, rather than exhaustively generating all succesful attacks, we generate bottom-up a set of candidate attacks that include all minimal succesful attacks, hence certainly the optimal attack by the monotonicity of $\recht{t}$. For this we generalize the set semantics of \cite{Kordy2018} to dynamic ATs. We also compare $\mathtt{MT\text{-}MILP}$ and $\mathtt{MT\text{-}Enum}$ to their modular counterparts.

\begin{wrapfigure}[15]{r}{5cm}
\centering
\vspace{-3em}
\begin{tabular}{cccc}
Source & $|N|$ & tree & Durations\\
\hline
     \cite{Kumar2015} Fig. 1 & 12 & no & unknown \\
     \cite{Kumar2015} Fig. 8 & 20 & no & unknown\\
     \cite{Kumar2015} Fig. 9 & 12 & no & unknown \\
     \cite{Arnold2015} Fig. 1 & 16 & yes & unknown \\
     \cite{Arnold2014} Fig. 3 & 8 & yes & known \\
     \cite{Arnold2014} Fig. 5 & 21 & yes & known\\
     \cite{Arnold2014} Fig. 7 & 25 & yes & known \\
     \cite{Fraile2016} Fig. 2 & 20 & yes & unknown \\
     \cite{Kordy2018} Fig. 1 & 15 & yes & unknown
\end{tabular}
\caption{DATs from the literature used as building blocks. Trees from \cite{Fraile2016,Kordy2018} are adapted from attack-defense trees.} \label{tab:DATs}
\end{wrapfigure}
Existing methods in the literature are based on series-parallel graphs \cite{Jhawar2015} and priced timed automata \cite{Kumar2015}. Their definitions of \MinTime are not equivalent to ours. In our view, methods with different definitions of \MinTime can only be compared with respect to computation time if one of them is designed to be an approximation or bound of the other; then one can compare the gain in computation time versus the loss in accuracy. However, this is not the case here: the multiple activation definition is fundamentally different, and a DAT constructed under this model represents a system different from the same DAT in the single activation model. Therefore, we cannot directly compare performance to that of existing approaches.

In practice, attack trees can be very large \cite{vigo2014automated,paul2014towards}; however, for confidentiality reasons these are typically not disclosed to the general public \cite{byres2004use,paul2014towards}. Hence to our knowledge no established benchmark suites of DATs exist, and the existing literature typically considers test cases with only $\leq 25$ nodes \cite{Arnold2014,Kumar2015}. For such small DATs, the computation of \MinTime takes less than a second no matter which algorithm is being used, which makes them unsuitable for testing difference in algorithm performance. To address the deficiency of a benchmark suite of large DATs, we create a synthetic set of testing DATs. These are created by combining DATs from the literature into larger ones.
Then, we compare (1) the MILP method $\mathtt{MT\text{-}MILP}$ to the enumerative algorithm $\mathtt{MT\text{-}Enum}$ and (2) the effect of modular analysis on performance time. 

All experiments are performed on a PC with an Intel  Core i7-10750HQ 2.8GHz processor and 16GB memory. All algorithms are implemented in Matlab, and for MILP we use the YALMIP environment \cite{yalmip} to translate the optimization problem into the Gurobi solver \cite{gurobi}, a state-of-the-art optimizer that can handle MILP problems. The code and results are available in \cite{code}.

\subsection{Generation of testing DATs}


\pgfplotstableread{
X Y
0.5000   -0.7455
1.5000   -0.6021
2.5000   -0.3912
3.5000   -0.1160
4.5000    0.2150
5.5000    0.7804
6.5000    1.5589
7.5000    2.0527
}{\mytablenaive}

\pgfplotstableread{
X Y 
0.5000   -0.6922
1.5000   -0.3912
2.5000   -0.1530
3.5000   0.0746
4.5000   0.1938
5.5000   0.3461
6.5000    0.4586
7.5000    0.5969
8.5000    0.6203
9.5000    0.7084
10.5000    0.8045
11.5    0.8599
12.5    0.9005
}{\mytablemilp}

\pgfplotstableread{
X Y 
0.5000   -0.5757
1.5000   -0.3996
2.5000   -0.2321
3.5000   -0.1160
4.5000    0.0280
5.5000    0.0571
6.5000    0.1761
7.5000    0.2372
8.5000    0.3010
9.5000    0.4033
10.5000    0.3785
11.5    0.4073
12.5    0.4319
}{\mytablemnaive}

\pgfplotstableread{
X Y 
0.5000   -0.7270
1.5000   -0.3291
2.5000   -0.0503
3.5000   -0.1232
4.5000    0.2694
5.5000    0.3897
6.5000    0.5098
7.5000    0.6004
8.5000    0.6824
9.5000    0.7804
10.5000    0.7948
11.5    0.8764
12.5    0.9416
}{\mytablemmilp}

To create a testing suite large enough for a meaningful performance comparison, we do the following. As building blocks, we use a selection of DATs from the literature, shown in Table \ref{tab:DATs}. For some, the duration of the BAS are random variables, and we take the expected value for the duration; otherwise we take a random duration from $\{1,2,\ldots,10\}$. We use three methods for combining two DATs $T_1,T_2$ into a larger one (see Figure \ref{fig:combi}):
\begin{enumerate}
\item We take a random BAS $v$ from $T_1$ and consider the modular composition by replacing $v$ in $T_1$ by $T_2$. This represents a larger system, in which one subsystem, represented by $v$ in $T_1$, is given its own DAT for more fine-grained analysis.
\item We introduce a new root node $v$ with a random label, and add edges $(v,\recht{R}_{T_1})$ and $(v,\recht{R}_{T_2})$. This represents a system consisting of two separate subsystems.
\item We introduce a new root node $v$ with a random label, and add edges $(v,\recht{R}_{T_1})$ and $(v,\recht{R}_{T_2})$; we then pick random BAS $b_1$ from $T_1$ and $b_2$ from $T_2$ and identify them (with a new random duration in $\{1,2,\ldots,10\}$). This represents a system consisting of two subsystems that have a shared attack step.
\end{enumerate}

\begin{figure}[t]
\centering
\begin{tikzpicture}
\draw[fill = black!30] (0,0) -- (-0.6,-0.9) -- (0.6,-0.9) -- cycle;
\draw (0,-0.5) node {$T_1$};
\draw[fill = black!30] (-0.2,-0.9) -- (-0.8,-1.8) -- (0.4,-1.8) -- cycle;
\draw (-0.2,-1.4) node {$T_2$};
\draw[fill = black!30] (2,-0.6) -- (1.4,-1.5) -- (2.6,-1.5) -- cycle;
\draw (2,-1.1) node {$T_1$};
\draw[fill = black!30] (3.4,-0.6) -- (2.8,-1.5) -- (4,-1.5) -- cycle;
\draw (3.4,-1.1) node {$T_2$};
\draw (2,-0.6) -- (2.7,-0.2) -- (3.4,-0.6);
\filldraw (2.7,-0.2) circle (2pt);

\draw[fill = black!30] (5.7,-0.6) -- (5.1,-1.5) -- (6.3,-1.5) -- cycle;
\draw (5.7,-1.1) node {$T_1$};
\draw[fill = black!30] (7.1,-0.6) -- (6.5,-1.5) -- (7.7,-1.5) -- cycle;
\draw (7.1,-1.1) node {$T_2$};
\draw (5.7,-0.6) -- (6.4,-0.2) -- (7.1,-0.6);
\filldraw (6.4,-0.2) circle (2pt);
\draw (6,-1.5) -- (6.4,-1.9) -- (7.1,-1.5);
\filldraw (6.4,-1.9) circle (2pt);
\end{tikzpicture}
\caption{The three ways of combining DATs.} \label{fig:combi}
\end{figure}

These are not the only ways by which multiple DATs can be combined; for instance, $T_1$ and $T_2$ could share multiple BAS. We selected these three methods to capture some of the common ways DATs are created by experts. Creating a benchmark suite of large DATs that resemble DATs from industry is an important avenue for further research, but beyond the scope of this paper.

We create two suites of testing DATs by combining the DATs from Table \ref{tab:DATs}. For the first suite, $\mathcal{A}$, we combine DATs using one of the three methods above (drawn randomly) until the result has a given number of nodes. The resulting will have many modules, as $T_1$ is a module under the first method, and both $T_1$ and $T_2$ are modules under the second method. Therefore, we expect the modular approaches to be very fast on the DATs in $\mathcal{A}$. To also study DATs with less modules, we create the second suite, $\mathcal{B}$, by combining DATs using only the third method. Again, one could assign other weights to the three combination methods to obtain yet different testing suites, but $\mathcal{A}$ and $\mathcal{B}$ represent two of the extremes of what DATs can look like.

For a given $n_{\recht{min}}$, we combine DATs randomly drawn from Table \ref{tab:DATs} (either via randomly drawn methods from the 3 above, or by method 3 only) until $|N| \geq n_{\recht{min}}$. We do this 5 times for each $1 \leq n_{\recht{min}} \leq 240$, giving us two testing sets $\mathcal{A},\mathcal{B}$ of 1200 DATs with $8 \leq |N| \leq 262$. On average 26.6\% of the nodes of ATs in $\mathcal{A}$, and 16.5\% of the nodes of ATs in $\mathcal{B}$ are modules. Furthermore 54.2\% of the nodes of ATs in $\mathcal{A}$, and 52.5\% of the nodes of ATs in $\mathcal{B}$ are BAS.

\subsection{Time comparisons}

\begin{table*}[t]
\centering
\renewcommand{\arraystretch}{1}

\begin{tabular}{l|cccc|ccc}
& \multicolumn{4}{c|}{$\mathcal{A}_{\recht{small}}$} & \multicolumn{3}{c}{$\mathcal{A}$} \\
& $\mathtt{MT\text{-}Enum}$ & $\mathtt{MT\text{-}MILP}$ & $\mathtt{MT\text{-}Enum_{Mod}}$ & $\mathtt{MT\text{-}MILP_{Mod}}$ & $\mathtt{MT\text{-}MILP}$ & $\mathtt{MT\text{-}Enum_{Mod}}$ & $\mathtt{MT\text{-}MILP_{Mod}}$\\ \hline
Median time & 1.234 & 0.906 & 1.461 & 1.680 & 1.422 & 2.797 & 3.070 \\
Max time & 10000 & 7.984 & 12.656 & 6.656 & 19.125 & 10000 & 30.469 \\
Failure & 3.71\% & 0\% & 0\% & 0\% & 0\% & 0.08\% & 0\% \\ \hline
& \multicolumn{4}{c|}{$\mathcal{B}_{\recht{small}}$} & \multicolumn{3}{c}{$\mathcal{B}$} \\ \hline
Median time & 1.391 & 0.938 & 1.469 & 1.656 & 1.266 & 3.203 & 2.773 \\
Max time & 10000 & 4.75 & 2326 & 9.484 & 4.75 & 10000 & 9.484 \\
Failure & 3.81\% & 0\%& 0\%& 0\% & 0\% & 3.08\% & 0\%
\end{tabular}

\caption{Summary of the results. All times are in seconds. \emph{Failure} denotes failure to compute within $10^4$ seconds. $\mathcal{A}_{\recht{small}}$ contains $754$ DATs with $\leq 160$ nodes, and $\mathcal{A}$ contains 1200 DATs with $\leq 262$ nodes (including those of $\mathcal{A}_{\recht{small}}$). The sets $\mathcal{B}_{\recht{small}}$ and $\mathcal{B}$ hold the same amount of DATs of the same size; they are designed to contain less modules.} \label{tab:results}
\end{table*}

We measure the computation time of the four algorithms on the testing set; we cap computation time per DAT at $10^4$ seconds. We group the DATs depending on their value of $\lceil |N|/20 \rceil$ and calculate the median per group: these are presented in Figure \ref{fig:lines}. We use the median because it allows us to incorporate the computations that were cancelled after $10^4$ seconds. Since already 21.3\% of the DATs of $\mathcal{A}$, and 13.8\% of DATs of $\mathcal{B}$, with $141 \leq |N| \leq 160$ fail to compute for $\mathtt{{MT}\text{-}{Enum}}$, we do not continue testing this method for larger DATs. The subsetsof $\mathcal{A},\mathcal{B}$ of DATs with $|N| \leq 160$ is called $\mathcal{A}_{\recht{small}},\mathcal{B}_{\recht{small}}$, and consist of $754$ resp. $761$ DATs. The results are also summarized in Table \ref{tab:results}, and pairwise comparisons are presented in Figure \ref{fig:scatter}. 

On the testing set $\mathcal{A}$, we see from Figure \ref{fig:lines}
that $\mathtt{MT\text{-}Enum}$ is by far the slowest method, while $\mathtt{MT\text{-}MILP}$ is the fastest; the two modular approaches are slightly slower than $\mathtt{MT\text{-}MILP}$ and have similar efficiency. While the inefficiency of $\mathtt{MT\text{-}Enum}$ is to be expected, it is surprising that modular analysis for MILP has a net negative effect on computation time. One possible reason is that the Gurobi solver, which we treat as a black box, might incorporate strategies to reduce the MILP problem complexity that are equivalent to modular analysis on the DAT side. At any rate, the enumerative approach clearly shows the advantage of incorporating the modular approach. These results are also reflected in Figure \ref{fig:scatter}(a)--(d).

\begin{figure}[t]
\begin{subfigure}{0.49\linewidth}
\centering
\begin{tikzpicture}
\begin{axis}[  
width = 6.5cm,
height = 4cm,
    legend style={legend pos = north west,legend columns=2},     
    ymin = -1,
    ymax = 3,
    ytick={-2,-1,0,1,2,3},
    yticklabels={$10^{-2}$,$10^{-1}$,$10^0$,$10^1$,$10^2$,$10^3$},
    ylabel near ticks, 
    xmin = 0,
    xmax = 13,
    xtick={2,4,6,8,10,12},  
    xticklabels = {40,80,120,160,200,240},
    nodes near coords align={vertical},  
    ] 
\addplot
  plot[color=blue,mark=triangle*,mark options = {fill=blue}]  table [x=X,y=Y] {\mytablenaive}; \label{pl:naive}
 \addplot
  plot[color=red,mark=square*,mark options = {fill=red}] table [x=X,y=Y] {\mytablemilp}; \label{pl:milp}
\addplot
  plot[color=brown,mark=*,mark options = {fill=brown}] table [x=X,y=Y] {\mytablemnaive}; \label{pl:mnaive}
\addplot
  plot[color=black,mark=diamond*,mark options = {fill=black}] table [x=X,y=Y] {\mytablemmilp}; \label{pl:mmilp}
\end{axis} 
\end{tikzpicture}
\caption{$\mathcal{A}$}
\end{subfigure}
\begin{subfigure}{0.49\linewidth}
\centering
\begin{tikzpicture}
\begin{axis}[  
width = 6.5cm,
height = 4cm,
    legend style={legend pos = north west,legend columns=2},     
    ymin = -1,
    ymax = 3,
    ytick={-2,-1,0,1,2,3},
    yticklabels={$10^{-2}$,$10^{-1}$,$10^0$,$10^1$,$10^2$,$10^3$},
    ylabel near ticks, 
    xmin = 0,
    xmax = 13,
    xtick={2,4,6,8,10,12},  
    xticklabels = {40,80,120,160,200,240},
    nodes near coords align={vertical},  
    ] 
    \addplot
  plot[color=blue,mark=triangle*,mark options = {fill=blue}]  table [x=X,y=Y] {\mytablenaivedag};
 \addplot
  plot[color=red,mark=square*,mark options = {fill=red}] table [x=X,y=Y] {\mytablemilpdag};
\addplot
  plot[color=brown,mark=*,mark options = {fill=brown}] table [x=X,y=Y] {\mytablemnaivedag};
\addplot
  plot[color=black,mark=diamond*,mark options = {fill=black}] table [x=X,y=Y] {\mytablemmilpdag};
\end{axis} 
\end{tikzpicture}
\caption{$\mathcal{B}$}
\end{subfigure}
\caption{Median time (in seconds) of \ref{pl:naive} $\mathtt{MT\text{-}Enum}$, \ref{pl:milp} $\mathtt{MT\text{-}MILP}$, \ref{pl:mnaive} $\mathtt{MT\text{-}Enum_{Mod}}$, \ref{pl:mmilp} $\mathtt{MT\text{-}MILP_{Mod}}$, grouped by the number of nodes $|N|$.}\label{fig:lines}
\end{figure}

\begin{figure}[t]
    \centering
    \begin{subfigure}[b]{0.32\textwidth}
    \centering
		\begin{tikzpicture}
		\tikzstyle{every node}=[font=\small]
		\begin{axis}[
		width = 4.5cm,
		height = 4.5cm,
		xmin = -2,
		xmax = 4,
		ymin = -2,
		ymax = 4,
		ylabel near ticks,
		xtick = {-2,-1,0,1,2,3,4},
		xticklabels = {$10^{-2}$,,$10^{0}$,,$10^2$,,$10^4$},
		ytick = {-2,-1,0,1,2,3,4},
		yticklabels = {$10^{-2}$,,$10^{0}$,,$10^2$,,$10^4$},
		]
		\addplot[only marks,mark=x] table[col sep = comma] {EvM_suc.csv};
		\addplot[only marks,mark=o,color=purple] table[col sep = comma] {EvM_fail.csv};
		\addplot[mark=none, blue] coordinates {(-2,-2) (4,4)};
		\end{axis}
		\end{tikzpicture}
		\caption{$\mathtt{MILP}$, $\mathtt{Enum}$ on $\mathcal{A}_{\recht{small}}$}
\end{subfigure}
    \begin{subfigure}[b]{0.32\textwidth}
    \centering
		\begin{tikzpicture}
		\tikzstyle{every node}=[font=\small]
		\begin{axis}[
		width = 4.5cm,
		height = 4.5cm,
		xmin = -2,
		xmax = 4,
		ymin = -2,
		ymax = 4,
		ylabel near ticks,
		xtick = {-2,-1,0,1,2,3,4},
		xticklabels = {$10^{-2}$,,$10^0$,,$10^2$,,$10^4$},
		ytick = {-2,-1,0,1,2,3,4},
		yticklabels = {$10^{-2}$,,$10^0$,,$10^2$,,$10^4$},
		]
		\addplot[only marks,mark=x] table[col sep = comma] {EvEm_suc.csv};
		\addplot[only marks,mark=o,color=purple] table[col sep = comma] {EvEm_fail.csv};
		\addplot[mark=none, blue] coordinates {(-2,-2) (4,4)};
		\end{axis}
		\end{tikzpicture}
    \caption{$\mathtt{Enum}$, $\mathtt{Enum_{Mod}}$ on $\mathcal{A}_{\recht{small}}$}
    \end{subfigure}
\begin{subfigure}[b]{0.32\textwidth}
    \centering
		\begin{tikzpicture}
		\tikzstyle{every node}=[font=\small]
		\begin{axis}[
		width = 4.5cm,
		height = 4.5cm,
		xmin = -2,
		xmax = 4,
		ymin = -2,
		ymax = 4,
		ylabel near ticks,
		xtick = {-2,-1,0,1,2,3,4},
		xticklabels = {$10^{-2}$,,$10^{0}$,,$10^2$,,$10^4$},
		ytick = {-2,-1,0,1,2,3,4},
		yticklabels = {$10^{-2}$,,$10^{0}$,,$10^2$,,$10^4$},
		]
		\addplot[only marks,mark=x] table[col sep = comma] {MvMm.csv};
		\addplot[mark=none, blue] coordinates {(-2,-2) (4,4)};
		\end{axis}
		\end{tikzpicture}
    \caption{$\mathtt{MILP_{Mod}}$, $\mathtt{MILP}$ on $\mathcal{A}$}
\end{subfigure}
\begin{subfigure}[b]{0.32\textwidth}
    \centering
		\begin{tikzpicture}
		\tikzstyle{every node}=[font=\small]
		\begin{axis}[
		width = 4.5cm,
		height = 4.5cm,
		xmin = -2,
		xmax = 4,
		ymin = -2,
		ymax = 4,
		ylabel near ticks,
		xtick = {-2,-1,0,1,2,3,4},
		xticklabels = {$10^{-2}$,,$10^{0}$,,$10^2$,,$10^4$},
		ytick = {-2,-1,0,1,2,3,4},
		yticklabels = {$10^{-2}$,,$10^{0}$,,$10^2$,,$10^4$},
		]
		\addplot[only marks,mark=x] table[col sep = comma] {EmvMm_suc.csv};
		\addplot[only marks,mark=o,color=purple] table[col sep = comma] {EmvMm_fail.csv};
		\addplot[mark=none, color=blue] coordinates {(-2,-2) (4,4)};
		\end{axis}
		\end{tikzpicture}
    \caption{$\mathtt{Enum_{Mod}}$, $\mathtt{MILP_{Mod}}$ on $\mathcal{A}$}
\end{subfigure}
\begin{subfigure}[b]{0.32\textwidth}
    \centering
		\begin{tikzpicture}
		\tikzstyle{every node}=[font=\small]
		\begin{axis}[
		width = 4.5cm,
		height = 4.5cm,
		xmin = -2,
		xmax = 4,
		ymin = -2,
		ymax = 4,
		ylabel near ticks,
		xtick = {-2,-1,0,1,2,3,4},
		xticklabels = {$10^{-2}$,,$10^{0}$,,$10^2$,,$10^4$},
		ytick = {-2,-1,0,1,2,3,4},
		yticklabels = {$10^{-2}$,,$10^{0}$,,$10^2$,,$10^4$},
		]
		\addplot[only marks,mark=x] table[col sep = comma] {MvMm_DAG.csv};
		\addplot[mark=none, color=blue] coordinates {(-2,-2) (4,4)};
		\end{axis}
		\end{tikzpicture}
    \caption{$\mathtt{MILP_{Mod}}$, $\mathtt{MILP}$ on $\mathcal{B}$} \label{fig:scatter_dag_milp}
\end{subfigure}
\begin{subfigure}[b]{0.32\textwidth}
    \centering
		\begin{tikzpicture}
		\tikzstyle{every node}=[font=\small]
		\begin{axis}[
		width = 4.5cm,
		height = 4.5cm,
		xmin = -2,
		xmax = 4,
		ymin = -2,
		ymax = 4,
		ylabel near ticks,
		xtick = {-2,-1,0,1,2,3,4},
		xticklabels = {$10^{-2}$,,$10^{0}$,,$10^2$,,$10^4$},
		ytick = {-2,-1,0,1,2,3,4},
		yticklabels = {$10^{-2}$,,$10^{0}$,,$10^2$,,$10^4$},
		]
		\addplot[only marks,mark=x] table[col sep = comma] {EmvMm_DAG_suc.csv};
		\addplot[only marks,mark=o,color=purple] table[col sep = comma] {EmvMm_DAG_fail.csv};
		\addplot[mark=none, color=blue] coordinates {(-2,-2) (4,4)};
		\end{axis}
		\end{tikzpicture}
    \caption{$\mathtt{Enum_{Mod}}$, $\mathtt{MILP_{Mod}}$ on $\mathcal{B}$} \label{fig:scatter_dag_enum}
\end{subfigure}
\caption{Pairwise computation time comparisons of the four algorithms. The first algorithm is the vertical axis while the second is the horizontal axis. Each mark is a DAT; purple circles are computations aborted for exceeding $10^4$ seconds.} \label{fig:scatter}
\end{figure}

Interestingly, the difference in median computation time between $\mathtt{MT\text{-}Enum}$ and $\mathtt{MT\text{-}MILP}$ disappears when considering the modular versions of these algorithms, although the worst-case behaviour of $\mathtt{MT\text{-}Enum_{Mod}}$ is considerably worse than that of $\mathtt{MT\text{-}MILP_{Mod}}$ (see Table \ref{tab:results}). We hypothesize that this is due to the fact that the DATs of $\mathcal{A}$ contain many modules. As a result, the `indecomposable' sub-DATs on which the algorithms $\mathtt{MT\text{-}Enum}$ and $\mathtt{MT\text{-}MILP}$ are called will typically be small. Since the difference in computation time between these algorithms only appears for larger DATs, we do not see it in these experiments. 

For testing set $\mathcal{B}$, we again see that $\mathtt{MT\text{-}Enum}$ is by far the slowest. Furthermore, for larger DATs $\mathtt{MT\text{-}MILP_{Mod}}$ outpaces $\mathtt{MT\text{-}Enum_{Mod}}$ considerably; see also Figure \ref{fig:scatter}(f). This shows that also in a modular setting the MILP approach significantly speeds up calculations for large enough DATs. This is to be expected from our results on set $\mathcal{A}$ as for larger DATs the `indecomposable' subDATs on which $\mathtt{MT\text{-}MILP}$ is invoked will be larger as well. Interestingly, on this dataset $\mathtt{MT\text{-}MILP}$ is slightly faster than $\mathtt{MT\text{-}MILP_{Mod}}$, as can also be seen from Figure \ref{fig:scatter}(e). This might be due to the fact that on wide DATs, the MILP methods of Gurobi are more efficient at splitting up DATs into modules than our Matlab implementation of the  modular decomposition algorithm. A detailed study into this difference in performance would entail a comprehensive analysis into Gurobi's Matlab implementation, which is beyond the scope of this paper.

Taking $\mathcal{A}$ and $\mathcal{B}$ together, we can conclude that both the MILP approach and modular analysis create a large decrease in computation time. While these methods are slightly slower for small DATs, computation time for such DATs only takes a few seconds anyway. By contrast, for larger DATs the difference in computation time can go up to a factor $10^3$. For DATs with large modules, $\mathtt{MT\text{-}Enum_{Mod}}$ loses out against $\mathtt{MT\text{-}MILP}$ and $\mathtt{MT\text{-}MILP_{Mod}}$, which behave similarly. 

\section{Conclusion and discussion}

This paper introduced two novel tools to calculate \emph{min time} for DATs. First, we introduced a novel MILP-based approach that finds \emph{min time} by phrasing it as an optimization problem. Second, we show how modular analysis can be used to reduce the computation time of any \emph{min time} calculation algorithm. In the experiments, we compared these to the enumerative method. The experiments show that for large DATs both MILP and modular analysis can have a big impact on computation time. In particular, the MILP approach is consistently fast on any input DAT, making it a reliable tool for quantitative DAT analysis in practice.

There are several directions in which this work can be expanded. First, a benchmark suite of DATs is needed. For this it is important to find out what sizes and properties are typical for DATs used in industry, even if industry DATs themselves may not be published due to confidentiality reasons.

Second, modular analysis can also be used for other metrics, as has been done for fault trees \cite{Reay2002,Ruijters2015}. Since modular analysis is a very general idea, a good approach would be to develop an axiomatization of metrics that can be handled via modular analysis, so that the method can be applied to a large set of metrics at once. Such a result is probably not hard to prove for metrics that are defined bottom-up as in \cite{Kordy2018}; the challenge lies in metrics that are defined directly from the semantics as in \cite{Budde2021}.

Third, our MILP approach can be combined with a Monte Carlo approach in a stochastic setting where the precise BAS values are unknown. A more thorough investigation can explore what guarantees such simulations can give for \MinTime. As Monte Carlo methods involve sampling a large sample, performance of the \MinTime calculation algorithm is important in such a study.

\bibliographystyle{splncs04}
\bibliography{mybibliography}

\appendix

\section{Proofs of Theorem \ref{thm:ta} and Lemma \ref{lem:maxm}} \label{app:thmta}

To prove Theorem \ref{thm:ta} we need some auxiliary results. We start with a stricter definition of time assignments.

\begin{definition}
A time assignment $f$ is called \emph{exact} if in Definition \ref{def:ta}, equality holds in 2) and 3) for all $v$, and equality holds in 4a) if the prerequisite conditions for 4b) and 4c) are not satisfied. The set of exact time assignments is denoted $\mathcal{F}_{\recht{E},T}$.
\end{definition}

The advantage of exact time assignments is that they are easier to reason about. The following lemma shows that restricting ourselves to exact time assignments does not affect the minimum.

\begin{lemma} \label{lem:ta0} $\min_{f \in \mathcal{F}_T} f_{\recht{R}_T} = \min_{f \in \mathcal{F}_{\recht{E},T}} f_{\recht{R}_T}$.
\end{lemma}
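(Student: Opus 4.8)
The plan is to prove the two inequalities $\min_{f \in \mathcal{F}_T} f_{\recht{R}_T} \leq \min_{f \in \mathcal{F}_{\recht{E},T}} f_{\recht{R}_T}$ and $\min_{f \in \mathcal{F}_T} f_{\recht{R}_T} \geq \min_{f \in \mathcal{F}_{\recht{E},T}} f_{\recht{R}_T}$ separately. The first inequality is immediate, since $\mathcal{F}_{\recht{E},T} \subseteq \mathcal{F}_T$: every exact time assignment is in particular a time assignment, so the minimum over the smaller set cannot be smaller. The content is in the reverse inequality, which amounts to showing that any time assignment $f \in \mathcal{F}_T$ can be ``tightened'' into an exact time assignment $\tilde f \in \mathcal{F}_{\recht{E},T}$ with $\tilde f_{\recht{R}_T} \leq f_{\recht{R}_T}$.

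First I would construct $\tilde f$ from $f$ by a bottom-up (leaf-to-root) recursion over the DAG: set $\tilde f_a = d_a$ for every $a \in N_{\mathtt{BAS}}$, then for an $\mathtt{OR}$-node set $\tilde f_v = \min_i \tilde f_{v_i}$, for an $\mathtt{AND}$-node set $\tilde f_v = \max_i \tilde f_{v_i}$, and for a $\mathtt{SAND}$-node $v = \mathtt{SAND}(v_1,\ldots,v_n)$ set $\tilde f_v = \infty$ if some $\tilde f_{v_i} = \infty$ or if there exist $i < n$ and $(a,a') \in Z^v_i$ with $\tilde f_{a'} - d_{a'} < \tilde f_a < \infty$, and otherwise set $\tilde f_v = \tilde f_{v_n}$. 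By construction $\tilde f$ satisfies all the defining inequalities with equality where required, so $\tilde f \in \mathcal{F}_{\recht{E},T}$; this needs only a routine check that the $\mathtt{SAND}$ case is well-defined (the three subconditions are evaluated on already-determined descendant values, since $Z^v_i$ involves only BASes in $B_{v_i} \cup B_{v_{i+1}} \subseteq B_v$).

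The crux is then the monotonicity claim: $\tilde f_v \leq f_v$ for every node $v$, proven by induction on the DAG structure (from leaves up). For BASes, $\tilde f_a = d_a \leq f_a$ by \eqref{eq:ta1}. For $\mathtt{OR}$ and $\mathtt{AND}$ nodes this follows directly from the induction hypothesis together with \eqref{eq:ta2}, \eqref{eq:ta3} and monotonicity of $\min$ and $\max$. The delicate case is $\mathtt{SAND}$: here I must argue that if $\tilde f_v = \infty$ then necessarily $f_v = \infty$ as well, which requires showing that the ``bad'' conditions triggering \eqref{eq:ta5} or \eqref{eq:ta6} for $\tilde f$ also trigger them for $f$. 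For \eqref{eq:ta5} this is the induction hypothesis ($\tilde f_{v_i} = \infty$ forces $f_{v_i} = \infty$ since $\tilde f_{v_i} \leq f_{v_i}$). For \eqref{eq:ta6} the subtlety is that the condition $\tilde f_{a'} - d_{a'} < \tilde f_a < \infty$ is \emph{not} obviously monotone in $f$, because decreasing $f_{a'}$ could destroy the inequality; however, since $a, a' \in N_{\mathtt{BAS}}$ one has $\tilde f_a = d_a$ and $\tilde f_{a'} = d_{a'}$, so the condition becomes simply $0 < d_a < \infty$, and I need to instead track this through the original $f$ more carefully — this is where I expect the main obstacle, and I would handle it by showing that $f$ restricted to BASes already witnesses a violation, or alternatively by observing that the construction can be carried out so that $\tilde f_a = d_a$ kills the $\mathtt{SAND}$ constraint only in the degenerate case, and reconciling this with the fact that $f$ itself satisfies \eqref{eq:ta6}. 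Once $\tilde f_{\recht{R}_T} \leq f_{\recht{R}_T}$ is established, taking $f$ to be a minimizer of the right-hand-side-free problem over $\mathcal{F}_T$ yields the reverse inequality and completes the proof.
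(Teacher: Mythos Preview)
Your approach has a genuine gap in the $\mathtt{SAND}$ case, and you correctly flag it as the ``main obstacle'' --- but neither of your proposed resolutions works. With $\tilde f_a = d_a$ for every BAS, the prerequisite of \eqref{eq:ta6} at a $\mathtt{SAND}$-node $v$ becomes $0 < d_a < \infty$ for some $(a,a') \in Z^v_i$, which holds whenever any BAS below $v$ has positive duration. Your $\tilde f$ therefore sets $\tilde f_v = \infty$ at essentially every $\mathtt{SAND}$-node, regardless of $f$, and the desired inequality $\tilde f_{\recht{R}_T} \leq f_{\recht{R}_T}$ fails badly. Concretely, take $v = \mathtt{SAND}(a,a')$ with $d_a = 3$, $d_{a'} = 4$, and $f_a = 5$, $f_{a'} = 10$: then $f_{a'} - d_{a'} = 6 \geq 5 = f_a$, so $f$ does not trigger \eqref{eq:ta6} and $f_v$ can be $10$; but $\tilde f_{a'} - d_{a'} = 0 < 3 = \tilde f_a$, so $\tilde f_v = \infty$. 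Your suggestion that ``$f$ restricted to BASes already witnesses a violation'' is exactly what this example refutes.

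The fix is simple and is what the paper does: do \emph{not} tighten the BAS values. Exactness only demands equality in \eqref{eq:ta2}, \eqref{eq:ta3}, and \eqref{eq:ta4}; condition \eqref{eq:ta1} may remain a strict inequality. So set $\tilde f_a = f_a$ for $a \in N_{\mathtt{BAS}}$ and only rebuild the internal nodes bottom-up with equality. Then the prerequisite of \eqref{eq:ta6} depends only on BAS values and is literally the same for $f$ and $\tilde f$; the prerequisite of \eqref{eq:ta5} transfers because $\tilde f_{v_i} \leq f_{v_i}$ by induction forces $f_{v_i} = \infty$ whenever $\tilde f_{v_i} = \infty$. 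The induction $\tilde f_v \leq f_v$ then goes through cleanly in all four cases.
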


\begin{proof}
Let $f \in \mathcal{F}_T$, and define $f' \in \mathcal{F}_{\recht{E},T}$ by $f'_{a} = f_{a}$ for $a \in N_{\mathtt{BAS}}$, and having $f'$ satisfy equality in 2) and 3) for all $v$, and by satisfying equality in 4a) if the prerequisite conditions for 4b) and 4c) are not satisfied. By induction it is straightforward to show that $f'_{\recht{R}_T} \leq f_{\recht{R}_T}$, which proves the lemma.
\end{proof}

The following lemma has a straightforward proof by induction which is therefore omitted.

\begin{lemma} \label{lem:ta1}
Let $f$ be an exact time assignment with $f_{\recht{R}_T} < \infty$. Then $f_{v} \in \{f_{a}: a \in B_{v}\} \cup \{\infty\}$ for all $v \in N$. \qed
\end{lemma}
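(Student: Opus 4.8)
The statement to prove is Lemma \ref{lem:ta1}: if $f$ is an exact time assignment with $f_{\recht{R}_T} < \infty$, then $f_v \in \{f_a : a \in B_v\} \cup \{\infty\}$ for all $v \in N$.

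The plan is to proceed by structural induction on the DAG, proving the statement for every node $v$ (not just the root), and using the exactness conditions crucially to pin down $f_v$ to the value of one of its children's completion times.

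First I would set up the induction on the height of $v$ (longest path from $v$ to a leaf), or equivalently do reverse topological induction. The base case is $v \in N_{\mathtt{BAS}}$: here $B_v = \{v\}$, and I would observe that an exact time assignment need not force $f_a = d_a$ — wait, actually \eqref{eq:ta1} only gives $f_a \geq d_a$, and exactness as defined says nothing about BAS nodes. Hmm, so for a BAS node $f_a$ could be any value $\geq d_a$; but we need $f_a \in \{f_a\} \cup \{\infty\}$, which is trivially true since $f_a \in \{f_a\}$. So the base case is immediate. Good.

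For the inductive step, split on $\gamma(v)$. If $v = \mathtt{OR}(v_1,\ldots,v_n)$, exactness gives $f_v = \min_i f_{v_i}$, so $f_v = f_{v_j}$ for some $j$; by the induction hypothesis $f_{v_j} \in \{f_a : a \in B_{v_j}\} \cup \{\infty\} \subseteq \{f_a : a \in B_v\} \cup \{\infty\}$ since $B_{v_j} \subseteq B_v$. If $v = \mathtt{AND}(v_1,\ldots,v_n)$, exactness gives $f_v = \max_i f_{v_i} = f_{v_j}$ for some $j$, and the same containment argument works. If $v = \mathtt{SAND}(v_1,\ldots,v_n)$: here we need to use that $f_{\recht{R}_T} < \infty$ — I'd argue that $f_v < \infty$ too (otherwise, since completion times only propagate upward and $\mathtt{SAND}$/$\mathtt{AND}$ take max-type behavior and $\mathtt{OR}$ takes min, an infinite value at $v$ would... actually need to be careful; the cleanest route is: by downward induction from the root, every node $v$ with $f_v < \infty$ has all the children it "needs" also finite — but really we just need the statement for nodes on the "support" of $f$, and for nodes with $f_v = \infty$ the conclusion holds trivially). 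So assume $f_v < \infty$. Then the prerequisites of \eqref{eq:ta5} and \eqref{eq:ta6} fail (since if they held, $f_v = \infty$), so by exactness $f_v = f_{v_n}$, and induction on $v_n$ finishes it (note $v_n$ has $f_{v_n} = f_v < \infty$ so IH applies cleanly, or applies anyway).

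The main obstacle — and the reason the lemma statement restricts to $f_{\recht{R}_T} < \infty$ — is handling the $\infty$/finite dichotomy cleanly so that the exactness equalities can actually be invoked. Specifically, for a $\mathtt{SAND}$ node one can only conclude $f_v = f_{v_n}$ when the prerequisite conditions for \eqref{eq:ta5}–\eqref{eq:ta6} fail, which needs $f_v < \infty$. So I would first establish, by downward induction from $\recht{R}_T$ using $f_{\recht{R}_T} < \infty$, that we only ever need to analyze nodes $v$ with $f_v < \infty$ (for the rest, $f_v = \infty$ and the claim is trivial). Actually even simpler: the claim is vacuously satisfied whenever $f_v = \infty$, so the induction only has genuine content at finite-valued nodes, and at such a $\mathtt{SAND}$ node the exactness hypothesis directly yields $f_v = f_{v_n}$. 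That is exactly why the proof "is straightforward by induction" as the authors say, and why I would feel comfortable omitting the routine details — though in my plan I'd at least flag the $\mathtt{SAND}$ case as the one place where the finiteness hypothesis is doing work.
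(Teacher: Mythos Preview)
Your proposal is correct and follows exactly the approach the paper intends: the paper explicitly states the lemma ``has a straightforward proof by induction which is therefore omitted,'' and your case analysis on $\gamma(v)$ using the exactness equalities is precisely that induction. Your observation that the hypothesis $f_{\recht{R}_T} < \infty$ is in fact not essential---because $f_v = \infty$ already lies in the target set, so only the finite-valued $\mathtt{SAND}$ case has nontrivial content---is a nice sharpening beyond what the paper records.
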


We need two more lemmas as ingredients for the proof of Theorem \ref{thm:ta}.

\begin{lemma} \label{lem:ta2}
Let $f$ be an exact time assignment. Then there exists a successful attack $\mathcal{O}$ with $\recht{t}(\mathcal{O}) \leq f_{\recht{R}_T}$.
\end{lemma}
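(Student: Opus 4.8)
The plan is to prove the statement by induction on the structure of the DAT $T$, constructing a successful attack whose duration is bounded by $f_{\recht{R}_T}$ at the same time as a bottom-up traversal. More precisely, I would prove the stronger statement: for every node $v$, if $f_v < \infty$, then there is an attack $\mathcal{O}_v = (A_v, \prec_v)$ with $A_v \subseteq B_v$ that reaches $v$ and satisfies $\recht{t}(\mathcal{O}_v) \leq f_v$; moreover, I would strengthen this further so that every maximal chain $C$ of $\mathcal{O}_v$ has $\sum_{a \in C} d_a \leq f_v$ and in fact the ``top'' of every chain finishes by time $f_v$, i.e. for every $a \in A_v$ the sum of durations along any chain ending at $a$ is $\leq f_a$. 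This last refinement is what makes the $\mathtt{SAND}$ case go through. If $f_{\recht{R}_T} = \infty$ there is nothing to prove (any successful attack works if one exists, and if not the claim is vacuous — but note Lemma~\ref{lem:ta2} as stated only asserts existence, so I would first handle the case $f_{\recht{R}_T}=\infty$ by observing that $\recht{t}(\mathcal{O}) \le \infty$ trivially for any successful attack, and separately note that exactness plus $f_{\recht{R}_T}=\infty$ together with Lemma~\ref{lem:ta1} forces... actually one should check $\mathcal{S}_T \ne \varnothing$ is not needed since $\le \infty$ is automatic).

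For the inductive step I would go through the four node types. For $v \in N_{\mathtt{BAS}}$: since $f$ is exact, $f_a \ge d_a$ and the attack $(\{a\}, \varnothing)$ reaches $a$ with duration $d_a \le f_a$. For $v = \mathtt{OR}(v_1,\ldots,v_n)$: exactness gives $f_v = \min_i f_{v_i}$, so pick the $i$ achieving the minimum; by induction there is $\mathcal{O}_{v_i}$ reaching $v_i$ with $\recht{t}(\mathcal{O}_{v_i}) \le f_{v_i} = f_v$, and this attack reaches $v$. For $v = \mathtt{AND}(v_1,\ldots,v_n)$: exactness gives $f_v = \max_i f_{v_i}$; take the union attack $\mathcal{O}_v = (\bigcup_i A_{v_i}, \recht{tr}(\bigcup_i \prec_{v_i}))$ — here I need that this is still a strict partial order, which holds because the $B_{v_i}$ may overlap but the per-BAS finishing-time bound from the strengthened hypothesis is consistent (two BASes shared between branches get compatible constraints), and a maximal chain of the union is a chain in one of the branches... this needs care when branches share BASes. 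The cleanest route: use that exact time assignments assign a single value $f_a$ to each BAS, and the strengthened hypothesis says along any chain ending at $a$ the duration-sum is $\le f_a$; since this property is about $f$ and not about which branch we came from, it is automatically preserved under union, and the chain condition $\recht{t}(\mathcal{O}_v) \le \max_i f_{v_i} = f_v$ follows because a maximal chain of the union, restricted to the ``last'' element, finishes by $f_a \le f_v$.

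The main obstacle is the $\mathtt{SAND}$ case, $v = \mathtt{SAND}(v_1,\ldots,v_n)$. Here exactness and $f_v < \infty$ mean none of the prerequisites for \eqref{eq:ta5}, \eqref{eq:ta6} hold, so all $f_{v_i} < \infty$ and for all $i < n$ and all $(a,a') \in Z^v_i$ we have $f_{a'} - d_{a'} \ge f_a$ (the negation of the strict inequality in \eqref{eq:ta6}), and $f_v = f_{v_n}$. By induction I get attacks $\mathcal{O}_{v_i}$ reaching $v_i$. I then form $\mathcal{O}_v$ by taking the transitive closure of $\bigcup_i \prec_{v_i}$ together with all pairs $(a,a')$ with $a \in A_{v_i} \cap B_{v_i}$, $a' \in A_{v_{i+1}} \cap B_{v_{i+1}}$ — exactly the $S$-construction from Appendix~\ref{app:min}. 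I must show (i) this is a strict partial order: a cycle would require some $a' \sqsubset a$ with also the forced $a \prec a'$, but the forcing $a \prec a'$ only happens for consecutive indices and the constraint $f_{a'} - d_{a'} \ge f_a$ strictly increases finishing times across the SAND layers, ruling out cycles; (ii) $\mathcal{O}_v$ reaches $v$: it reaches each $v_i$ since we only added order relations, and it satisfies clause~4 of Definition~\ref{def:success} by construction; (iii) $\recht{t}(\mathcal{O}_v) \le f_v = f_{v_n}$: a maximal chain passes through the layers $v_1, \ldots, v_k$ for some increasing indices, and using the per-BAS bound together with $f_{a'} \ge f_a + d_{a'}$ across layers, the total duration telescopes to at most $f_{a_{\max}} \le f_{v_n} = f_v$, where $a_{\max}$ is the last BAS on the chain, which lies in $B_{v_j}$ for some $j$ and hence has $f_{a_{\max}} \le f_{v_j} \le f_{v_n}$ — wait, that last inequality needs $f_{v_j} \le f_{v_n}$, which I would get from the SAND constraints: any BAS in an earlier layer must finish before any BAS in a later layer starts, and $v_n$'s attack reaches its own final BAS, so $f_{v_j} \le f_{v_n}$ indeed. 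Assembling these, $\mathcal{O}_{\recht{R}_T}$ is the desired successful attack. I expect verifying the strict-partial-order property and the chain-duration telescoping in the $\mathtt{SAND}$ case — especially bookkeeping the shared-BAS subtleties — to be the technically fiddly part; everything else is routine induction.
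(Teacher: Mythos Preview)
Your approach is genuinely different from the paper's, and it can be made to work, but you are missing the one observation that would make the ``fiddly'' parts routine.

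The paper does \emph{not} build attacks bottom-up. Instead it defines, for any threshold $C$, a single global attack $\mathcal{O}^C_f = (A^C_f,\prec^C_f)$ directly from the BAS-level values:
\[
A^C_f = \{a \in N_{\mathtt{BAS}} : f_a \le C\},\qquad
a \prec^C_f a' \iff f_a \le f_{a'} - d_{a'}.
\]
It then proves by induction on nodes that $\mathcal{O}^C_f$ reaches every $v$ with $f_v \le C$, and finally takes $C = f_{\recht{R}_T}$. The chain bound $\sum_{a\in C} d_a \le f_{\max C}$ is immediate from the definition of $\prec^C_f$ by telescoping, and $f_{\max C} \le f_{\recht{R}_T}$ by construction. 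No unions of orders, no shared-BAS bookkeeping.

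Your construction implicitly lives inside this global order: every relation you add at a $\mathtt{SAND}$ gate is a pair $(a,a')$ with $f_{a'} - d_{a'} \ge f_a$ (this is exactly the negation of the precondition of \eqref{eq:ta6}), and this property is preserved under transitive closure. That single invariant is what you need and do not state. With it, acyclicity of the union in both the $\mathtt{AND}$ and $\mathtt{SAND}$ cases is automatic (your combined order is a subrelation of the paper's $\prec^C_f$), and your ``per-BAS'' chain bound follows by the same telescoping. Without it, your argument for acyclicity when branches share BASes is not a proof: the ``per-BAS finishing-time bound'' you invoke is a statement about chain sums, not about compatibility of the relations themselves, and does not by itself rule out $a \prec_{v_1} b$ together with $b \prec_{v_2} a$. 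Similarly, your $\mathtt{SAND}$ chain estimate ``$f_{a_{\max}} \le f_{v_j} \le f_{v_n}$'' needs justification you do not give; once you have the invariant, it is just $f_{a_{\max}} \le C$.

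In short: your plan is salvageable, but the clean fix is to track the invariant $a \prec a' \Rightarrow f_a \le f_{a'}-d_{a'}$ throughout, at which point you have essentially reconstructed the paper's global attack restricted to the BASes you selected. The paper's route gets there in one step.
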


\begin{proof}
Let $C \in (0,\infty)$. Define the attack $\mathcal{O}^C_f = (A^C_f,\prec^C_f)$ by
\begin{align*}
A^C_f &= \{a \in N_{\mathtt{BAS}}:f_{a} \leq C\},\\
\prec^C_f &= \{(a,a') \in (A^C_f)^2: f_{a} \leq f_{a'}-d(a')\}.
\end{align*}
We prove by induction on $T$ that for any $f$ and $C$ one has that $\mathcal{O}^C_f$ reaches a node $v$ when $f_{v} \leq C$. For convenience we assume that $T$ is binary; this does not substantially alter the proof but makes the notation easier.
\begin{itemize}
    \item Suppose $a \in N_{\mathtt{BAS}}$, and suppose $f_{a} \leq C$. Then $a \in A^C_f$ and $\mathcal{O}^C_f$ is successful.
    \item Suppose $v = \mathtt{OR}(v_1,v_2)$ and $f_{v} \leq C$. Since $f_{v} = \min\{f_{v_1},f_{v_2}\}$, we may assume WLOG that $f_{v_1} \leq C$. By the induction hypothesis the attack $\mathcal{O}^C_{f}$ reaches $v_1$. It follows that $\mathcal{O}^C_f$ reaches $v_1$, and so it reaches $v$ too.
    \item The case that $v= \mathtt{AND}(v_1,v_2)$ is analogous to the $\mathtt{OR}$-case.
    \item Suppose that $v = \mathtt{SAND}(v_1,v_2)$ and $f_{v} \leq C$; then \begin{equation}
    f_{v_1}\leq f_{v_2} = f_{v} \leq C <\infty    
    \end{equation} 
    and $f_{a} \leq f_{a'}-d(a')$ for all $a \in B_{v_1}$ and $a' \in B_{v_2}$ for those $a$ with $f_{a} < \infty$. By the induction hypothesis, one proves as in the $\mathtt{OR}$-case that $\mathcal{O}^C_f$ reaches both $v_1$ and $v_2$. Furthermore, by definition of $\prec^C_f$ one has $a \prec^C_f a'$ for all $a \in A^C_f \cap B_{v_1}$ and $a' \in A^C_f \cap B_{v_2}$. We conclude that $\mathcal{O}^C_f$ reaches $v$.
\end{itemize}
Now take $C = f_{\recht{R}_T} < \infty$; the resulting attack $\mathcal{O}^{f_{\recht{R}_T}}_f$ is successful. Lemma \ref{lem:ta1} then tells us that
\begin{equation*}
f_{\recht{R}_T} \in \left\{f_{a}: a \in A^{f_{\recht{R}_T}}_f\right\}.
\end{equation*}
Since the maximum of the RHS is at most equal to $f_{\recht{R}_T}$ we find
\begin{equation} \label{eq:pfta1}
  f_{\recht{R}_T} = \max_{a \in A^{f_{\recht{R}_T}}_f} f_{a}.
\end{equation}
On the other hand, by induction it can be shown straightforwardly that for every chain $C$ in $\mathcal{O}^{f_{\recht{R}_T}}_f$ one has
\begin{equation*}
    f_{\max C} \geq \sum_{a \in C} d(a).
\end{equation*}
Combining this with \eqref{eq:pfta1} we find $f_{\recht{R}_T} \geq \recht{t}(\mathcal{O}^{f_{\recht{R}_T}}_f)$.
\end{proof}

\begin{lemma} \label{lem:ta3}
Let $\mathcal{O}$ be a successful attack. Then there exists an exact time assignment $f$ such that $f_{\recht{R}_T} \leq \recht{t}(\mathcal{O})$.
\end{lemma}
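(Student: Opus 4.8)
The plan is to construct the exact time assignment $f$ by a bottom-up recursion mirroring the recursive definition of exact time assignments, using the given successful attack $\mathcal{O} = (A,\prec)$ to supply the completion times of the BASes. Concretely, for each BAS $a \in A$ I would set $f_a$ to be the length of the longest $\prec$-chain in $A$ ending at $a$ (weighted by durations), i.e.\ $f_a = \max_{C} \sum_{a' \in C} d_{a'}$ where $C$ ranges over chains in $\mathcal{O}$ with maximum element $a$; for $a \notin A$ set $f_a = \infty$. Then propagate upward: $f_v = \min_i f_{v_i}$ at $\mathtt{OR}$-nodes, $f_v = \max_i f_{v_i}$ at $\mathtt{AND}$-nodes, and at $\mathtt{SAND}$-nodes set $f_v = f_{v_n}$ if the prerequisite conditions for \eqref{eq:ta5} and \eqref{eq:ta6} fail, and $f_v = \infty$ otherwise. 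By construction $f$ satisfies \eqref{eq:ta1} (since $a \in A$ lies in a chain of length $\geq d_a$) and the exactness requirements for \eqref{eq:ta2}--\eqref{eq:ta4}.

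The key claim to verify by induction on the structure of $T$ is: \emph{if $\mathcal{O}$ reaches $v$, then $f_v \leq \recht{t}(\mathcal{O})$ and $f_v < \infty$}; applying this at $v = \recht{R}_T$ gives the lemma. The $\mathtt{BAS}$ case is immediate from the definition of $f_a$ and the fact that $\recht{t}(\mathcal{O})$ is the max over \emph{maximal} chains, which dominates the max over chains ending at $a$. The $\mathtt{OR}$ and $\mathtt{AND}$ cases follow from the induction hypothesis applied to the relevant children together with the fact that any chain in $\mathcal{O}$ restricted to $B_{v_i}$ is still a chain in $\mathcal{O}$. The delicate case is $\mathtt{SAND}(v_1,\ldots,v_n)$: here I must check that the prerequisite conditions of \eqref{eq:ta5} and \eqref{eq:ta6} are \emph{not} triggered, so that $f_v = f_{v_n}$ is the value actually assigned. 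Condition \eqref{eq:ta5} fails because $\mathcal{O}$ reaches all $v_i$, so by induction all $f_{v_i} < \infty$. Condition \eqref{eq:ta6} requires showing there is no pair $(a,a') \in Z_i^v$ with $f_{a'} - d_{a'} < f_a < \infty$: since $\mathcal{O}$ reaches the $\mathtt{SAND}$-node, Definition \ref{def:success}(4) forces $a \prec a'$ for all such $a \in A \cap B_{v_i}$, $a' \in A \cap B_{v_{i+1}}$, and then the chain-length definition of $f$ gives $f_a + d_{a'} \leq f_{a'}$, i.e.\ $f_a \leq f_{a'} - d_{a'}$, contradicting the prerequisite of \eqref{eq:ta6}. (For $a \notin A$ we have $f_a = \infty$, so the strict inequality $f_a < \infty$ also fails.) Hence $f_v = f_{v_n} \leq \recht{t}(\mathcal{O})$ by the induction hypothesis on $v_n$.

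The main obstacle I anticipate is being careful that the constructed $f$ is genuinely a time assignment on \emph{all} of $T$, not just along the subtree reached by $\mathcal{O}$: at nodes not reached by $\mathcal{O}$ one must confirm the inequalities \eqref{eq:ta2}--\eqref{eq:ta4} and the $\mathtt{SAND}$ conditions \eqref{eq:ta5}--\eqref{eq:ta6} are still satisfied when various children carry value $\infty$ — this is routine but needs the convention that $\min, \max, +$ and the comparisons behave correctly with $\infty$, and that \eqref{eq:ta6}'s prerequisite "$f_a < \infty$" correctly excludes unreached BASes. A secondary subtlety is making sure the weighted-longest-chain definition of $f_a$ is well-defined (finiteness of chains, which holds since $A$ is finite) and that it is compatible with $\prec$ in the sense that $a \prec a' \Rightarrow f_a + d_{a'} \leq f_{a'}$, which is exactly the monotonicity one needs in the $\mathtt{SAND}$ argument above. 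Once these bookkeeping points are settled, the induction closes and $f_{\recht{R}_T} \leq \recht{t}(\mathcal{O})$ follows.
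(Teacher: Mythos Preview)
Your proposal is correct and follows essentially the same approach as the paper: the same chain-length definition of $f_a$ on BASes (the paper writes it recursively as $f_a = d_a + \max_{a' \prec a} f_{a'}$, which is equivalent to your longest-chain formulation), the same exact bottom-up propagation to inner nodes, and the same induction verifying that the $\mathtt{SAND}$ prerequisites \eqref{eq:ta5}--\eqref{eq:ta6} are not triggered whenever $\mathcal{O}$ reaches $v$. The only cosmetic difference is that you carry the bound $f_v \leq \recht{t}(\mathcal{O})$ directly through the induction, whereas the paper proves only $f_v < \infty$ in the induction and then invokes Lemma~\ref{lem:ta1} together with $\recht{t}(\mathcal{O}) = \max_{a \in A} f_a$ to obtain the bound at the root.
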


\begin{proof}
Let $\mathcal{O} = (A,\prec)$. Define an exact time assignment $f_{\mathcal{O}}$ as follows:
\begin{itemize}
    \item If $a \in N_{\mathtt{BAS}} \setminus A$, take $f_{\mathcal{O},a} = \infty$;
    \item If $a \in A$, define $f_{\mathcal{O}}$ recursively by
    \begin{equation*}
        f_{\mathcal{O},a} = d(a)+\max_{a' \prec a} f_{\mathcal{O},a'},
    \end{equation*}
    where the maximum over the empty set equals $0$;
    \item If $\gamma(v) \in \{\mathtt{OR},\mathtt{AND}\}$, define $f_{\mathcal{O},v}$ by taking equality in 2) and 3) of Definition \ref{def:ta}; if $\gamma(v) = \mathtt{SAND}$, take equality in 4a) if the conditions for 4b) and 4c) are not satisfied, and otherwise take $f_{\mathcal{O},v} = \infty$.
\end{itemize}
We prove by induction on $T$ that $f_{\mathcal{O},v}< \infty$ if $\mathcal{O}$ reaches $v$. For convenience we again assume that $T$ is binary.
\begin{itemize}
    \item When $v \in N_{\mathtt{BAS}}$ it is clear from the definition.
    \item Suppose $v = \mathtt{OR}(v_1,v_2)$. WLOG $\mathcal{O}$ reaches $v_1$, so by the induction hypothesis $f_{\mathcal{O},v_1} < \infty$. As $f_{\mathcal{O},v} \leq f_{\mathcal{O},v_1}$ the induction hypothesis holds for $v$.
    \item The case $v = \mathtt{AND}(v_1,v_2)$ is analogous to the $\mathtt{OR}$-case.
    \item Suppose $v = \mathtt{SAND}(v_1,v_2)$. Since $v$ is reached, this means that $v_1$ and $v_2$ are reached and that $a \prec a'$ for all $a \in B_{v_1} \cap A$, $a' \in B_{v_2} \cap A$. The first statement implies, by the induction hypothesis, that $f_{v_1},f_{v_2} < \infty$. The second statement, together with the definition of $f_{\mathcal{O}}$ on BAS, implies that $f_{a} \leq f_{a'}-d(a')$ for $a,a'$ as above, which means that the prerequisite condition for 4c) never holds. It follows that $f_{\mathcal{O},v} < \infty$.
\end{itemize}
Since $\mathcal{O}$ is successful one has $f_{\mathcal{O},\recht{R}_T} < \infty$, so by Lemma \ref{lem:ta1} we know that 
\begin{equation*}
f_{\mathcal{O},\recht{R}_T} \in \{f_{a}\colon a \in A\}.
\end{equation*}
However, by induction on the strict poset $\mathcal{O}$ one can prove that $\recht{t}(\mathcal{O}) = \max_{a \in A} f_{\mathcal{O},a}$. It follows that $\recht{t}(\mathcal{O}) \geq f_{\mathcal{O},\recht{R}_T}$, as was to be shown.
\end{proof}

\begin{proof}[Proof of Theorem \ref{thm:ta}]
By Lemma \ref{lem:ta0} it is enough to only consider exact time assignments. From Lemma \ref{lem:ta2} it follows that
\begin{equation*}
\min_{\mathcal{O} \in \mathcal{S}_T} \recht{t}(\mathcal{O}) \leq \min_{f \in \mathcal{F}_{\recht{E},T}} f_{\recht{R}_T}.
\end{equation*}
On the other hand, Lemma \ref{lem:ta3} tells us that
\begin{equation*}
\min_{\mathcal{O} \in \mathcal{S}_T} \recht{t}(\mathcal{O}) \geq \min_{f \in \mathcal{F}_{\recht{E},T}} f_{\recht{R}_T}.
\end{equation*}
Since the LHS of these two inequalities is equal to $\recht{mt}(T)$, these two inequalities together prove the theorem.
\end{proof}

\begin{proof}[Proof of Lemma \ref{lem:maxm}]
If $\mathcal{S}_T = \varnothing$, then the optimal $f$ has $f_{\recht{R}_T} = \infty$; hence the constant time assignment $f \equiv \infty$ also minimizes \eqref{eq:opt}.

Suppose $\mathcal{S}_T \neq \varnothing$, and let $f$ be a time assignment minimizing \eqref{eq:opt}. Since $\mathcal{S}_T \neq \varnothing$ one has $f_{\recht{R}_T} = \recht{mt}(T) < \infty$. Let $\mathcal{O}$ be as in Lemma \ref{lem:ta2}, and let $f' = f_{\mathcal{O}}$ be as in the proof of Lemma \ref{lem:ta3}. Then $f'_{\recht{R}_T} \leq f_{\recht{R}_T} = \recht{mt}(T)$, but since $f_{\recht{R}_T}$ is minimal this is an equality. Furthermore, is straightforward to prove by induction that $f'_v \leq M-1$ for all $v$ from the definition of $f_{\mathcal{O}}$.
\end{proof}

\section{Proof of Theorem \ref{thm:prune}} \label{app:prune}

To prove this theorem, we first prove two auxiliary lemmas. Throughout this section, we write $d_v := d|_{B_v}$ for convenience.

\begin{lemma} \label{lem:prune1}
Let $\mathcal{O}^v$ be a successful attack on $T^v$. then there is a successful attack $\mathcal{O}$ on $T$ with $\recht{t}(\mathcal{O},d) = \recht{t}(\mathcal{O}^v,d^v)$.
\end{lemma}

\begin{proof}
Let $\mathcal{O}^v = (A^v,\prec^v)$. If $\tilde{v} \notin A$, then $\mathcal{O}^v$ is also a successful attack on $T$. If we take $\mathcal{O} = \mathcal{O}^v$, then $\recht{t}(\mathcal{O},d) = \recht{t}(\mathcal{O}^v,d^v)$. Now suppose $\tilde{v} \in A$. Let $\mathcal{O}_{v,-} = (A_{v,-},\prec_{v,-})$ be a successful attack on $T_v$ that satisfies $\recht{t}(\mathcal{O}_{v,-},d_v) = \recht{mt}(T_v,d_v)$. Define an attack $\mathcal{O} = (A,\prec)$ on $T$ by $A = A^v\setminus\{\tilde{v}\} \cup A_{v,-}$, and $a \prec a'$ if and only if one of the following holds:
\begin{itemize}
    \item $a,a' \in A^v\setminus\{\tilde{v}\}$ and $a \prec^v a'$;
    \item $a,a' \in A_{v,-}$ and $a \prec_{v,-} a'$;
    \item $a \in A^v\setminus\{\tilde{v}\}$, $a' \in A_{v,-}$, and $a \prec^v \tilde{v}$;
    \item $a \in A_{v,-}$, $a' \in A^v \setminus \{\tilde{v}\}$ and $\tilde{v} \prec^v a'$.
\end{itemize}
Intuitively, $\mathcal{O}$ is the order obtained by taking $\mathcal{O}^v$ and replacing $\tilde{v}$ with the attack $\mathcal{O}_{v,-}$. Since $v$ is a module, $A^v\setminus \{\tilde{v}\}$ and $A_{v,-}$ are disjoint, and $\mathcal{O}$ is an attack. Since it activates $v$ and $\mathcal{O}^v$ is a sucessful attack on $T^v$, the attack $\mathcal{O}'$ is successful on $T$. Furthermore, the maximal chains $C$ of $\mathcal{O}$ are obtained by taking a maximal chain $C^v$ of $\mathcal{O}^v$ and replacing a possible instance of $\tilde{v}$ with a maximal chain $C_{v,-}$ of $\mathcal{O}_{v,-}$. From the definition of $d^v_{\tilde{v}}$ it follows that if $\tilde{v} \notin C^v$ one has
\begin{equation} \label{eq:maxchain1}
    \sum_{a \in C} d(a) = \sum_{a \in C^v} d^v(a),
\end{equation}
and if $\tilde{v} \in C^v$ one has

\begin{align}
    \sum_{a \in C} d(a) &= \sum_{a \in C^v} d^v(a)+\sum_{a \in C_{v,-}} d(a) - d^v(\tilde{v}) \nonumber \\
    &\geq \sum_{a \in C^v} d^v(a), \label{eq:maxchain2}
\end{align}
with equality if and only if $C_{v,-}$ satisfies $\recht{mt}(T_v,d_v) = \sum_{a \in C_{v,-}} d(a)$. Taking the maximum over all $C^v$ in both \eqref{eq:maxchain1} and \eqref{eq:maxchain2}, we find that
\begin{align*}
\recht{t}(\mathcal{O},d) &= \max_{\substack{C \textrm{ max. chain}\\ \textrm{in } \mathcal{O}}} \sum_{a \in C} d(a)\\
&= \max_{\substack{C^v \textrm{ max. chain}\\ \textrm{in } \mathcal{O}^v}} \sum_{a \in C^v} d^v_a\\
&= \recht{t}(\mathcal{O}^v,d^v).
\end{align*}
\end{proof}

\begin{lemma} \label{lem:prune2}
Let $\mathcal{O}$ be a successful attack on $T$. Then there exists a successful attack $\mathcal{O}^v$ on $T^v$ with $\recht{t}(\mathcal{O},d) \geq \recht{t}(\mathcal{O}^v,d^v)$.
\end{lemma}

\begin{proof}
Let $\mathcal{O} = (A,\prec)$. If $\mathcal{O}$ does not reach $v$, define $\mathcal{O}^v = (A^v,\prec^v)$ by
\begin{align*}
A^v &= A \setminus B_{v},\\
\prec^v &= \prec|_{A^v}.
\end{align*}
Then $\mathcal{O}^v$ is a successful attack on both $T^v$ and $T$. Since $\mathcal{O}^v \leq \mathcal{O}$ one has $\recht{t}(\mathcal{O}^v,d^v) = \recht{t}(\mathcal{O}^v,d) \leq \recht{t}(\mathcal{O},d)$. Now suppose $\mathcal{O}$ reaches $v$; then $\mathcal{O}|_{B_{v}}$ is a successful attack on $T_v$. Define another attack $\mathcal{O}_1 = (A_1,\prec_1)$ on $T$ by $A_1 = A \setminus A$ and $a \prec_1 a'$ if and only if one of the following holds:
\begin{itemize}
    \item $a,a' \in A \setminus B_{v}$ and $a \prec a'$;
    \item $a,a' \in B_{v}$ and $a \prec a'$;
    \item $a \in A \setminus B_{v}$, $a' \in B_{v}$ and $a \prec b$ for all $b \in B_{v} \cap A$;
    \item $a \in B_{v}$, $a' \in A \setminus B_{v}$ and $b \prec a'$ for all $b \in B_{v} \cap A$.
\end{itemize}
$\mathcal{O}_1$ is the attack obtained from $\mathcal{O}$ by removing all relations between elements of $A \setminus B_{v}$ and elements of $B_{v}$ that are not shared with all elements of $B_{v}$. Since $v$ is a module, the constraints on $\prec$ regarding relations between elements of $B_{v}$ and elements of $A \setminus B_{v}$ in Definition \ref{def:success} are the same for all elements of $B_v$. Hence we find that $\mathcal{O}_1$ is successful because $\mathcal{O}$ is. Furthermore, $\mathcal{O}_1 \leq \mathcal{O}$, so $\recht{t}(\mathcal{O}_1,d) \leq \recht{t}(\mathcal{O},d)$.

Define an attack $\mathcal{O}^v = (A^v,\prec^v)$ on $T^v$ by $A^v = A_1 \setminus B_{v} \cup \{\tilde{v}\}$ and $a \prec^v a'$ if and only if one of the following holds:
\begin{itemize}
    \item $a,a' \in A_1 \setminus B_{v}$ and $a \prec_1 a'$;
    \item $a \in A_1 \setminus B_{v}$, $a' = \tilde{v}$, and $a \prec_1 b$ for all $b \in A_1 \cap B_{v}$;
    \item $a = \tilde{v}$, $a' \in A_1 \setminus B_{v}$, and $b \prec_1 a'$ for all $b \in A_1 \cap B_{v}$.
\end{itemize}
Since $\mathcal{O}_1$ is successful on $T$ we find that $\mathcal{O}^v$ is successful on $T^v$. Furthermore, the set of maximal chains of $\mathcal{O}_1$ is obtained by taking a maximal chain of $\mathcal{O}^v$ and replacing $\tilde{v}$ (if it occurs) with a maximal chain of $\mathcal{O}|_{B_{v}}$. It follows that 
\begin{align*}
\recht{t}(\mathcal{O}_1,d) &= \max(M_1,M_2-d^v(\tilde{v})+\recht{t}(\mathcal{O}|_{B_{v}},d_v))
\end{align*}
where
\begin{align*}
M_1 &= \max_{\substack{C^v \text{ max. chain of } \mathcal{O}^v\colon\\ \tilde{v} \notin C^v}} \sum_{a \in C^v} d(a),\\
M_2 &= \max_{\substack{C^v \text{ max. chain of } \mathcal{O}^v\colon\\ \tilde{v} \in C^v}} \sum_{a \in C^v} d(a).
\end{align*}
By definition of $d^v(\tilde{v})$ one has $\recht{t}(\mathcal{O}|_{B_{v}},d_v) \geq d^v_{\tilde{v}}$. Since $\recht{t}(\mathcal{O}^v,d^v) = \max\{M_1,M_2\}$, it follows that $\recht{t}(\mathcal{O}_1,d) \geq \recht{t}(\mathcal{O}^v,d^v)$. Since we already know that $\recht{t}(\mathcal{O},d) \geq \recht{t}(\mathcal{O}_1,d)$, we get $\recht{t}(\mathcal{O},d) \geq \recht{t}(\mathcal{O}^v,d^v)$.
\end{proof}

\begin{proof}[Proof of Theorem \ref{thm:prune}]
By Lemma \ref{lem:prune1} one has $\recht{mt}(T,d) \leq \recht{mt}(T^v,d^v)$, and by Lemma \ref{lem:prune2} one has $\recht{mt}(T,d) \geq \recht{mt}(T^v,d^v)$.
\end{proof}

\end{document}